\newcommand{\mat}[1]{\ensuremath{\boldsymbol{\rm #1}}}
\newcommand{\prob}[2]{\mathbb{P}_{#1}\left(#2\right)}
\newcommand{\expect}[2]{\mathbb{E}_{#1}\left(#2\right)}
\newcommand{\IInt}[2]{\left\llbracket #1, #2 \right\rrbracket}
\newcommand{\fract}[2]{\hbox{\leavevmode
\kern.1em \raise .5ex \hbox{\the\scriptfont0 $#1$}\kern-.1em }/
\hbox{\kern-.15em \lower .25ex \hbox{\the\scriptfont0 $#2$}}
}
\newcommand{\Psucc}{\ensuremath{p_{\mathrm{succ}}}}
\newcommand{\Fq}{\mathbb{F}_q}
\newcommand{\F}{\mathbb{F}}
\newcommand{\NN}{\mathbb{N}}
\newcommand{\C}{\mathcal{C}}
\newcommand{\dgv}{d_{\mathsf{GV}}}
\newcommand{\hw}[1]{\ensuremath{\left|#1\right|}}
\newcommand{\support}[1]{\ensuremath{\mathrm{supp}\left(#1\right)}}
\newcommand{\card}[1]{\ensuremath{\left| #1 \right|}}
\newcommand{\transp}[1]{\ensuremath{#1^\intercal}}
\newcommand{\class}[1]{\ensuremath{\left[#1\right]}}
\newcommand{\rep}[1]{\ensuremath{\overline{#1}}}
\newcommand{\repbis}[1]{\ensuremath{\widehat{#1}}}
\newcommand{\done}{\ensuremath{\mathsf{Done}}}
\newcommand{\bigslant}[2]{{\raisebox{.15em}{$#1$}\mathord{\left/\raisebox{-.15em}{$#2$}\right.}}}
\newcommand{\projspace}[1]{\ensuremath{\bigslant{#1}{\sim}}}
\newcommand{\sA}{{\mathscr{A}}}
\newcommand{\sL}{{\mathscr{L}}}
\newcommand{\oo}[1]{\ensuremath{\mathop{}\mathopen{}o\mathopen{}\left(#1\right)}}
\newcommand{\Gm}{{\mathbf{G}}}
\newcommand{\Hm}{{\mathbf{H}}}
\newcommand{\Am}{{\mathbf{A}}}
\newcommand{\Mm}{{\mathbf{M}}}
\newcommand{\Pm}{{\mathbf{P}}}
\newcommand{\Rm}{{\mathbf{R}}}
\newcommand{\Id}{{\mathbf{I}}}
\newcommand{\Ov}{{\mathbf{0}}}
\newcommand{\cv}{{\mathbf{c}}}
\newcommand{\ev}{{\mathbf{e}}}
\newcommand{\sv}{{\mathbf{s}}}
\newcommand{\uv}{{\mathbf{u}}}
\newcommand{\vv}{{\mathbf{v}}}
\newcommand{\xv}{{\mathbf{x}}}
\newcommand{\yv}{{\mathbf{y}}}
\newcommand{\zv}{{\mathbf{z}}}
\newcommand{\eqdef}{\stackrel{\textrm{def}}{=}}
\renewcommand{\leq}{\leqslant}
\renewcommand{\geq}{\geqslant}
\newcommand{\SD}{\textsf{SD}\xspace}
\newcommand{\SDitH}{\textsf{SDitH}\xspace}
\newcommand{\Wave}{\textsf{Wave}\xspace}
\newcommand{\BIKE}{\textsf{BIKE}\xspace}
\newcommand{\McEliece}{\textsf{Classic McEliece}\xspace}
\newtheorem{fact}[theorem]{Fact}
\newtheorem{notation}[theorem]{Notation}
\newtheorem{constraint}[theorem]{Constraint}
\newtheorem{model}[theorem]{Model}
\newtheorem{assumption}[theorem]{Assumption}
\spnewtheorem{pb}[theorem]{Problem}{\bfseries}{\itshape}
\newtheorem{remark}{Remark}
\newtheorem{lemma}{Lemma}
\newtheorem{definition}{Definition}
\newtheorem{theorem}{Theorem}
\newtheorem{pb}{Problem}
\newtheorem{notation}{Notation}
\newcommand{\ie}{{\em i.e. }}
\begin{document}
	
\title{Projective Space Stern Decoding  and Application to SDitH}

\iftoggle{llncs}{

\author{Kevin Carrier\inst{1} \and Valerian Hatey\inst{1} \and Jean-Pierre Tillich\inst{2}}
\authorrunning{K. Carrier \and V. Hatey \and J-P. Tillich}

\institute{
	ETIS UMR 8051 - Cergy-Paris Université, ENSEA, CNRS, \email{kevin.carrier@cyu.fr,valerian.hatey@ensea.fr}
	\and
	Project COSMIQ, Inria de Paris, \email{jean-pierre.tillich@inria.fr}
	\thanks{The work of KC, VH and JPT was funded by the French Agence Nationale de la Recherche through ANR JCJC DECODE (ANR-22-CE39-0004-01) for KC and VH and ANR-22-PETQ-0008 PQ-TLS for JPT.} 	
}

}{

\author{K\'{e}vin Carrier$^{1}$} 
\email{kevin.carrier@cyu.fr}
\author{Val\'{e}rian Hatey$^{1}$} 
\email{valerian.hatey@ensea.fr}  
\author{Jean-Pierre Tillich$^{4}$} 
\email{jean-pierre.tillich@inria.fr}

\address{$^{1}$ ETIS UMR 8051 - Cergy-Paris Université, ENSEA, CNRS}
\address{$^{2}$ Project COSMIQ, Inria de Paris}

\thanks{The work of KC and JPT was funded by the French Agence Nationale de la Recherche through project ANR JCJC DECODE (ANR-22-CE39-0004-01) for KC and by 
project ANR-22-PETQ-0008 PQ-TLS for JPT.} 	

}
\maketitle

\begin{abstract}
We show that here standard decoding algorithms for generic linear codes over a finite field can  speeded up by a factor which is essentially the size of the finite field
by reducing it to a low weight codeword problem and working in the relevant projective space. We apply this technique to \SDitH and show that the parameters of both the original 
submission and the updated version fall short of meeting the security requirements asked by the NIST.
\end{abstract}

\section{Introduction} 
\label{sec:intro}

Code-based cryptography is based on the hardness of the decoding problem. In its syndrome (and fixed weight) version it is given for the Hamming metric
(where we denote the Hamming weight of a vector $\xv$ by $\hw{\xv}$) by
\begin{pb}[Syndrome Decoding $\SD(\Hm,\sv,t)$]
\label{pb:SD}
Given a matrix $\Hm \in \Fq^{(n-k) \times n}$, a syndrome $\sv \in \Fq^{n-k}$ and a weight $t \in \IInt{0}{n}$, find a vector $\ev \in \Fq^{n}$ such that $\Hm \ev = \sv$ and $\hw{\ev} = t$.
\end{pb}
In other words, it consists in solving a linear system with a constraint on the weight of the solution. This non-linear constraint is commonly believed to make the problem difficult on average over $\Hm$ for suitable values of $t$.
Despite that many efforts have been spent over the last 60 years \cite{P62,S88,D91,MMT11,BJMM12,MO15,BM17,BM18,CDMT22}, the problem remains hard in the range of parameters given above, even with the help of a quantum computer 
\cite{B10,KT17}.
Thus, the decoding problem has raised interest among cryptosystem designers. It is today the heart of the security of PKE and signature schemes submitted to the NIST competitions\footnote{\url{https://csrc.nist.gov/projects/post-quantum-cryptography}} such as \McEliece \cite{AABBBBDGGGGMPRSTVZ22}, \BIKE \cite{ABCCGLMMMNPPPSSSTW20}, \Wave \cite{BCCCDGKLNSST23} and \SDitH \cite{AFGGHJJPRRY23}.
It is quite common to study the binary version of the decoding problem but the non-binary case also aroused interest \cite{BLP10,BLP11a} or more recently with the signature schemes \Wave \cite{DST19a} or \SDitH \cite{FJR22a} for instance. The security of \Wave is based on ternary codes and \SDitH addresses the syndrome decoding over the fields $\F_{256}$ and $\F_{251}$. In this article, we focus on the case where $q$ is large, as is the case in\SDitH.

The best known decoding algorithm are Information Set Decoding (ISD) initiated by Prange in \cite{P62}. The idea of Prange basically consists in 
guessing that $\ev$ is zero on an \emph{information set}, that is a set of $k$ positions that determines the whole vector $\ev$ considering the linear relation $\Hm \ev = \sv$. If the guess does not allow to find a vector of weight $t$, then we repeat the process changing the information set until we make the right guess on it.

There have been numerous improvements to the Prange algorithm. One of the first breakthrough in this domain uses the \emph{birthday paradox} \cite{S88,D91}. Basically, the ISD template is used here to reduce the decoding problem to a collision search. Later, other techniques were introduced to improve ISD. For instance, \cite{MMT11} and \cite{BJMM12} exploit the fact that a low weight vector can be represented in multiple ways as the sum of two lower weight vectors, 
it is the so-called \emph{representation technique} introduced by Howgrave-Graham and Joux in \cite{HJ10}. In the \SDitH specifications, it is noticed that the representation technique was originally designed for the binary case and that it loses its interest when $q$ is large. This claim is supported by Meurer in his PhD thesis \cite{M12} and also by Canto-Torres in \cite{C17a} where he shows that the MMT \cite{MMT11} and BJMM \cite{BJMM12} complexity exponent tend to the Prange complexity exponent when $q$ tends to infinity. This is the reason why the algorithm on which \SDitH focuses on is Stern \cite{S88} which is considered optimal by the authors of \SDitH in their particular context.

\subsection{Our contribution}
Our main observation here is that decoding over a big field can basically be speeded up by a factor which is the size of the field by a simple homogenizing trick (or what is the same by a reduction to the low weight codeword search problem). The idea is that instead of looking for a vector $\ev$ of weight $t$ satisfying $\Hm \ev = \sv$ we look for a vector $\xv$ of weight 
$t$ such that $\Hm \xv$ is {\em proportional} to $\sv$, {\em i.e.} is such that $\Hm \xv = \lambda \sv$ for some $\lambda \in \Fq$. If we find such a vector (and if $\lambda \neq 0$ which will happen with large probability as we will see in what follows) then we get from such an $\xv$ our $\ev$ by taking $\ev = \lambda^{-1} \xv$. The point is that basically all the collision search techniques used for solving the decoding problem get speeded by essentially a factor $q-1$ by identifying all vectors which are proportional. This is particularly helpful in the case where we work 
with big field sizes as is the case for the NIST submission \SDitH \cite{AFGGHJJPRRY23}. 
We will adapt this idea to one of the simplest collision decoding technique, namely Stern's decoding algorithm over $\Fq$ \cite{P10}. 
We call this variant, {\em projective Stern's algorithm} since we work here essentially in the projective space.
We provide here a clean counting of the complexity of this variant of Stern's algorithm  
 in the spirit of \cite[Ch. 6]{P11a}. This precise complexity counting 
 includes the use of the Canteaut-Chabaud technique \cite{CC98} to gain in the 
complexity of Gaussian elimination. This part can not be neglected at all for giving tight security estimates in the case of \SDitH because the list sizes in an optimal Stern algorithm are in this case really small. In \SDitH there is also a variant of the decoding which is considered, which is the $d$-split variant: the support of the error is split into $d$ equal parts and it asked to find
an error $\ev$ of weight $t/d$ on each of the part. We give an adaptation of our projective Stern's algorithm to this case too. 

We will study in detail the impact of this technique to \SDitH both for the initial submission \cite{AFGGHJJPRRY23} and for the recent update that can be found on 
\url{https://sdith.org/}. The initial submission was unfortunately affected by a mistake in the choice of parameters that corresponded to a region where there were several hundred of solutions to the decoding problem whereas the analysis implicitly assumed that there was just one. The security claims made in \cite{AFGGHJJPRRY23} were incorrect because of this.
The new algorithm presented here also reduces the security of this proposal. All in all, this shows that the security of the initial submission \cite{AFGGHJJPRRY23} is below the NIST requirements by 9 to 14 bits depending on the \SDitH variant. Three days after preliminary results of this work were made public \cite{CHT23}, new parameters of \SDitH were released and announced on the NIST forum\footnote{See \url{https://groups.google.com/a/list.nist.gov/g/pqc-forum/c/OOnB655mCN8/m/rL4bPD20AAAJ}}. This new parameter set corrected the initial error in the parameter choice (now the parameters are chosen such that there is typically just one solution to the decoding problem).  The authors took a $4$ bit security margin between the NIST security requirements and the 
estimate for the best attack provided in \url{https://sdith.org/}. We show here that this is still a little bit short of meeting the NIST requirements by roughly one bit. It should be noted that 
contrarily to \cite{AFGGHJJPRRY23} which uses (i) a non tight reduction from standard decoding to $d$-split decoding which gives an overestimate on the attacks, (ii) neglects the cost 
of Gaussian elimination in the attack, our security estimate is based on a precise count of the complexity of the attack which does not neglect the cost of Gaussian elimination. It turns out that the optimal parameters for the projective Stern algorithm are in the regime where the cost of Gaussian elimination is non negligible.

 \section{Preliminaries}\label{sec:notations}

\textbf{Vectors and matrices.} Vectors and matrices are respectively denoted in bold letters 
and bold capital letters such as $\vv$ and $\Mm$. The entry at index $i$ of the vector $\vv$ is denoted by $v_i$. 
$\transp{\Mm}$ stands for the transpose of the matrix $\Mm$. To limit the use of transposition notation as much as possible, we consider in this paper that the vectors are column vectors; so $\transp{\vv}$ represents a row vector.
Let $I$ be a list of indexes. We denote by $\vv_{I}$ the vector $(v_i)_{i \in I}$. In the same way, we denote by $\Mm_{I}$ the submatrix made up of the columns of $\Mm$ which are indexed by $I$.
The notation $\support{\vv}$ stands for the support of $\vv$, that is the set of the non-zero positions of $\vv$.

The double square brackets stand for a set of consecutive integers. For instance, $\IInt{a}{b}$ are the integers between $a$ and $b$.\\

\noindent\textbf{Coding background.}
A linear code $\C$ of length $n$ and dimension $k$ over the field $\Fq$ is a subspace of $\Fq^n$ of dimension $k$. We say that it is an $[n, k]_q$-code. 
It can be defined by a \emph{generator matrix} $\Gm \in \Fq^{k \times n}$ whose rows form a basis of the code:
\begin{equation}
\C \eqdef \left\{\transp{\Gm} \uv\; : \; \uv \in \Fq^k \right\}.
\end{equation}
A \emph{parity-check matrix} for $\C$ is a matrix $\Hm \in \Fq^{(n-k) \times n}$ whose right kernel is $\C$: 
\begin{equation}
\C \eqdef \left\{\cv \in \Fq^n \; : \; \Hm \cv = \Ov \right\}.
\end{equation}

A set of $k$ positions that fully defines a code $\C$ is called an \emph{information set}. In other words, for $I \subseteq \IInt{1}{n}$ such that $\card{I} = k$ and $J \eqdef \IInt{1}{n} \setminus I$, the subset $I$ is an information set if and only if $\Gm_I$ is invertible or equivalently $\Hm_J$ is invertible. In that case, $J$ is called \emph{redundancy set}.

In this paper, we address the decoding problem \ref{pb:SD}. We focus on the case where the decoding distance $t$ is lower than $n - \frac{n}{q}$. We distinguish two particular regimes: when the decoding problem has typically less than one solution and when it has more. For $\Hm \in \Fq^{(n-k) \times n}$ and $\sv \in \Fq^{n-k}$ that are drawn uniformly at random, the greatest distance $t$ for which the decoding problem has less than one solution on expectation is called the \emph{Gilbert-Varshamov distance} and it is denoted by:
\begin{equation}
\dgv(n,k) \eqdef \sup \left( \left\{ t \in \IInt{0}{n - \tfrac{n}{q}} \; : \; \binom{n}{t}(q-1)^t \leq q^{n-k} \right\} \right)
\end{equation}

\textbf{How we measure complexities.} Because one of our goals is to compare our results to those of the \SDitH specifications, we measure the complexities in the same way as they do. In particular, we assume that the additions and multiplications in $\Fq$ are implemented using lookup tables and that these two operations therefore have the same cost. In the \SDitH specifications, this cost is considered as $\log_2(q)$ which is the estimated cost of a memory access. In the following, we count the complexities in number of additions/multiplications and therefore we ignore the factor $\log_2(q)$. However, the results of Section \ref{sec:sdith} are given with this factor. \section{The Stern decoder and Peters' improvements}
\label{sec:stern_peters}
 
In this section, we recall the main results of \cite{S88} and \cite{P10} that are used in \SDitH specifications to solve the decoding problem $\SD(\Hm,\sv,t)$. 
Stern's decoding algorithm is an iterative algorithm that is parametrized by two integers $p$ and $\ell$ to optimize. Each iteration starts by selecting an information set $I \subseteq \IInt{1}{n}$ of size $k$. We denote $J \eqdef \IInt{1}{n}\setminus I$. Then, we search for $\xv \in \Fq^k$ of weight $2p$ such that:
\begin{equation}
\hw{\Pm \xv - \yv} = t - 2p
\end{equation}
where 
\begin{equation}
\Pm \eqdef \Hm_J^{-1} \Hm_I \;\;\;\;\; \text{ and }\;\;\;\;\; \yv \eqdef \Hm_J^{-1} \sv
\end{equation}
We can easily verify that if we find such an $\xv$, then the vector $\ev \in \Fq^n$ defined by $\ev_I \eqdef \xv$ and $\ev_J \eqdef \yv - \Pm \xv$ is a solution to the decoding problem. By making the additional bet that the sought error  $\ev$ is such that $\ev_I$ is of weight $p$ on each half and $\ev_J$ is $\Ov$ on its $\ell$ first positions, 
we can use collision search to find $\ev$ more efficiently. Indeed, using lookup tables, one can find all pairs $(\xv_1, \xv_2) \in \Fq^k \times \Fq^k$ such that $\xv_1$ is zero on its second half (resp. $\xv_2$ is zero on its first half), $\hw{\xv_1} = p$ (resp. $\hw{\xv_2} = p$) and $\Pm \xv_1 - \yv$ and $\Pm \xv_2$ collide on their $\ell$ first positions. Thus, for each of these collisions, the vector $\ev$ such that $\ev_I \eqdef \xv_1 + \xv_2$ and $\ev_J \eqdef \yv - \Pm (\xv_1+\xv_2)$ is a potential solution to the \SD problem because it has syndrome $\sv$ and it is of particular low weight on at least $k+\ell$ positions. Finally, Algorithm \ref{algo:Stern} summarizes the Stern decoder.\\

\SetKwInput{KwParams}{Parameters}
\SetKwFor{RepTimes}{repeat}{times}{end}
\SetKwFor{RepAsNecessary}{repeat as many times as necessary}{}{end}

\begin{algorithm}[h]
    \DontPrintSemicolon 

    \KwIn{$\Hm \in \Fq^{(n-k) \times n}$, $\sv \in \Fq^{n-k}$ and $t \in \IInt{0}{n}$.}
    \KwParams{$p \in \IInt{0}{\frac{\min(t,k)}{2}}$ and $\ell \in \IInt{0}{n-k - t + 2p}$.}
    \KwOut{$\ev \in \Fq^n$ such that $\Hm \ev = \sv$ and $\hw{\ev}=t$.}
	\vspace{0.2cm}
	 \RepAsNecessary{}{
	 	draw $I \subseteq \IInt{1}{n}$ of size $k$ uniformly at random \label{algomark:info}\;
	 	$J \gets \IInt{1}{n} \setminus I$ \;
	 	$\Pm \gets \Hm_J^{-1}\Hm_I$ \label{algomark:Gauss}\tcc*{\small if $\Hm_J$ is not invertible, go back to step \ref{algomark:info}}
	 	$\yv \gets \Hm_J^{-1} \sv$ \;
	 	$\Rm \gets$ the $\ell$ first rows of $\Pm$ \;
	 	$\zv \gets$ the $\ell$ first positions of $\sv$ \;
	 	$\sL_1 \gets \left\lbrace \Rm \xv_1 - \zv \; : \; \xv_1 \in \Fq^{\lfloor k/2 \rfloor} \times 0^{k-\lfloor k/2 \rfloor} \text{ and } \hw{\xv_1} = p \right\rbrace$ \;
		$\sL_2 \gets \left\lbrace \Rm \xv_2 \; : \; \xv_2 \in 0^{\lfloor k/2 \rfloor} \times \Fq^{k-\lfloor k/2 \rfloor} \text{ and } \hw{\xv_2} = p \right\rbrace$ \;
		
		\ForAll{$\left( \Rm \xv_1 - \zv, \Rm \xv_2 \right) \in \sL_1 \times \sL_2$ such that $\Rm \xv_1 - \zv = \Rm \xv_2$  \label{algomarker:forloop}}{
			\If{$\hw{\Pm (\xv_1 - \xv_2) - \yv} = t - 2p$ \label{algomarker:check}}{
			\Return $\ev$ such that $\ev_I \eqdef \xv_1 - \xv_2$ and $\ev_J \eqdef \yv - \Pm (\xv_1 - \xv_2)$ \;
			}
		}
	 }

    \caption{Stern's algorithm to solve $\SD(\Hm,\sv,t)$\label{algo:Stern}}

\end{algorithm}

Note that if a particular error vector $\ev$ has the good weight distribution -- that is $\ev_I$ is of weight $p$ on each half, $\ev_J$ is $\Ov$ on its $\ell$ first positions and of weight $t - 2p$ on its $n-k-\ell$ other positions -- then Stern's algorithm will find $\ev$. So the probability to find a particular solution is
\begin{equation}
p_{\mathrm{part}} = \dfrac{\binom{\lfloor k/2 \rfloor}{p}\binom{k - \lfloor k/2 \rfloor}{p} \binom{n - k - \ell}{t - 2p}}{\binom{n}{t}}
\end{equation}
Moreover, in the case where $\sv$ has been produced as the syndrome of an error $\widetilde{\ev}$ of weight $t$ -- that means $\sv$ has been drawn uniformly at random in $\{ \Hm \widetilde{\ev} \; : \; \hw{\widetilde{\ev}} = t \}$ -- then, the expected number of solutions to the decoding problem we address is

\begin{eqnarray}
N_{\mathrm{sol}} & \eqdef & \expect{\Hm}{\card{\left\{ \ev \in \Fq^n \; : \; \hw{\ev}=t \text{ and } \Hm \ev = \Hm \widetilde{\ev} \right\}}} \\[0.2cm]
 & = & 1 + \sum_{\substack{\ev \in \Fq^n \setminus \{\widetilde{\ev}\} \\ \hw{\ev}=t} } \prob{\Hm}{\Hm \ev = \Hm \widetilde{\ev} } \\[0.2cm]
 & = & 1 + \frac{\binom{n}{t}(q-1)^{t} - 1}{q^{n-k}}\label{eq:Nsol}
\end{eqnarray}

Thus, the success probability of one iteration of Stern's algorithm is
\begin{equation}
\Psucc  =  1 - \left(1 - p_{\mathrm{part}} \right)^{N_{\mathrm{sol}}}
\end{equation}
So on average over the choice of $\Hm$, we need to repeat Stern's procedure $\frac{1}{\Psucc}$ times before finding a solution to the decoding problem.
To determine the complexity of Stern's algorithm, we still have to measure the time complexity of one iteration. Using some of the tricks proposed in \cite{P10}, the designers of \SDitH claim to be able to perform each iteration of Stern with a running time
\begin{equation}
\begin{array}{lcl}
T_{\mathrm{iter}} & = & \tfrac{1}{2} (n-k)^2(n+k) \\[0.2cm]
& & \;\;\;\; + \ell \left( \tfrac{k}{2} - p + 1 + \left( \binom{\lfloor k/2 \rfloor}{p} + \binom{k - \lfloor k/2 \rfloor}{p} \right) (q-1)^p\right) \\[0.2cm]
& & \;\;\;\;+ \tfrac{q}{q-1}(t-2p+1) 2p\left( 1 + \tfrac{q-2}{q-1}\right) \dfrac{\binom{\lfloor k/2 \rfloor}{p} \binom{k - \lfloor k/2 \rfloor}{p} (q-1)^{2p}}{q^{\ell}}
\end{array}
\end{equation}

 \section{Reducing the decoding problem to the low weight codeword search}\label{sec:reduction}

Working in projective spaces is only interesting if the syndrome is zero. In that case, we are actually looking for a low weight codeword instead of an error vector. 
This can be readily achieved by a well known reduction from decoding  in an $[n,k]_q$ linear code  to a low weight codeword search in an $[n,k+1]_q$ linear code 
that we now recall.
Let $\Hm$ be a parity check matrix of a code $\C$. Without loss of generality, we can consider that $\Hm$ is in systematic form\footnote{The first operation of Stern's algorithm precisely consists in putting the parity-check matrix in systematic form (up to a permutation). And therefore making this assumption does not induce any additional cost.}:
\begin{equation}
\label{eq:systematic}
\Hm \eqdef \left[ \Am | \Id_{n - k} \right] \;\;\;\;\; \text{ where } \Am \in \Fq^{(n-k) \times k}
\end{equation}
To solve the decoding problem $\SD(\Hm, \sv, t)$, one can find a low weight codeword in the new code 
\begin{equation}
\C' \eqdef \langle\C, \zv \rangle \eqdef \left\{ \cv + \alpha \zv \; : \; \cv \in \C \text{ and } \alpha \in \Fq\right\}
\end{equation}
where $\zv \in \Fq^n$ is any solution of the equation $\Hm \zv = \sv$ (without any weight constraint on $\zv$). Because $\Hm$ is in systematic form, we can  take $\transp{\zv}   \eqdef (\transp{\Ov}, \transp{\sv}) \in \Fq^n$. Then a generator matrix of $\C'$ is
\begin{equation}
\Gm' \eqdef \left[ 
\begin{matrix}
\Id_{k} & -\transp{\Am} \\
\transp{\Ov} & \transp{\sv}
\end{matrix}
\right]
\end{equation}

By only one step of a Gaussian elimination (one column to eliminate), we can find a parity-check matrix $\Hm'$ of the augmented code $\C'$.\\

By looking for a low weight codeword in $\C'$ -- \ie a vector $\ev$ such that $\Hm' \ev = \Ov$ --, we actually find a low weight error $\ev$ of $\C$ that has syndrome $\Hm \ev = \alpha \sv$ where $\alpha$ can be any scalar in $\Fq$. There are two possible situations: either $\alpha = 0$ or $\alpha \neq 0$. If $\alpha = 0$ then we have actually found a codeword in $\C$ instead of an error vector (we want to avoid this situation). On the contrary, if $\alpha \neq 0$ then a solution to our original decoding problem is simply $\alpha^{-1} \ev$. We now claim that the probability to get $\alpha = 0$ is lower than $\frac{1}{q}$:

\begin{theorem}\label{th:reduction}
Let a code $\C$ be the right kernel of a parity-check matrix $\Hm \in \Fq^{(n-k)\times n}$ and let $\sv \in \left\lbrace \Hm \widetilde{\ev} \; : \; \hw{\widetilde{\ev}}=t \right\rbrace$ for $t \in \IInt{0}{n}$.
Let $\C' \eqdef \langle\C, \zv \rangle$ be the code generated by the codewords in $\C$ and any word $\zv \in \Fq^n$ such that $\Hm \zv = \sv$. We denote by $\Hm'$ a parity-check matrix of this augmented code.
Then we can solve the decoding problem $\SD(\Hm, \sv, t)$ by solving, on average over $\Hm$, at most $\frac{q}{q-1}$ low weight codeword searches $\SD(\Hm', \Ov, t)$.
\end{theorem}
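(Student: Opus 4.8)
The plan is to study one low weight codeword search and show that it is ``useful'' with probability at least $\frac{q-1}{q}$; a geometric-series argument then bounds the expected number of searches by $\frac{q}{q-1}$. First I would record the dichotomy already isolated before the statement: any weight-$t$ codeword $\ev$ returned by $\SD(\Hm',\Ov,t)$ lies in $\C'$, hence decomposes uniquely as $\ev=\cv+\alpha\zv$ with $\cv\in\C$ and $\alpha\in\Fq$, so that $\Hm\ev=\alpha\Hm\zv=\alpha\sv$. If $\alpha\neq 0$ then $\alpha^{-1}\ev$ still has weight $t$ (multiplying by a nonzero scalar preserves the Hamming weight) and satisfies $\Hm(\alpha^{-1}\ev)=\sv$, so it solves $\SD(\Hm,\sv,t)$; only the event $\alpha=0$, in which $\ev$ is a genuine codeword of $\C$, is useless. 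Everything therefore reduces to bounding the probability of the event $\alpha=0$. (The case $\sv=\Ov$ is degenerate: then $\C'=\C$ and the problem already \emph{is} a codeword search, solved by a single call.)

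Next I would partition $\C'$ into the $q$ cosets $\alpha\zv+\C$, $\alpha\in\Fq$, which are pairwise distinct because $\zv\notin\C$ whenever $\sv\neq\Ov$, and count the weight-$t$ words in each. The coset $\alpha=0$ is $\C$ itself, and for a uniformly random $\Hm$ a fixed nonzero weight-$t$ word lies in $\C$ with probability $q^{-(n-k)}$, so
\[
\expect{\Hm}{\card{\{\cv\in\C : \hw{\cv}=t\}}} \;=\; \frac{\binom{n}{t}(q-1)^t}{q^{n-k}} .
\]
For each $\alpha\neq 0$ the bijection $\xv\mapsto\alpha\xv$ sends the weight-$t$ solutions of $\Hm\xv=\sv$ onto the weight-$t$ elements of $\alpha\zv+\C$, so every nonzero coset contains in expectation exactly $N_{\mathrm{sol}}$ weight-$t$ words, the quantity computed in \eqref{eq:Nsol}.

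The crux is the comparison $\frac{\binom{n}{t}(q-1)^t}{q^{n-k}}\le N_{\mathrm{sol}}$, which follows at once from \eqref{eq:Nsol} since it is equivalent to $q^{n-k}\ge 1$. Thus the zero coset carries no more weight-$t$ words on average than any one of the $q-1$ nonzero cosets, whence, by monotonicity of $x\mapsto x/(x+c)$, the expected fraction of weight-$t$ codewords of $\C'$ for which $\alpha=0$ obeys
\[
\frac{\expect{\Hm}{\card{\{\cv\in\C : \hw{\cv}=t\}}}}{\expect{\Hm}{\card{\{\cv\in\C : \hw{\cv}=t\}}}+(q-1)\,N_{\mathrm{sol}}} \;\le\; \frac{N_{\mathrm{sol}}}{q\,N_{\mathrm{sol}}} \;=\; \frac1q .
\]
Hence a single search is useful with probability at least $1-\frac1q$, and treating successive searches as Bernoulli trials with this success probability gives an expected number of searches of at most $\frac{1}{1-1/q}=\frac{q}{q-1}$, as claimed.

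The step I expect to require the most care is the passage from ``expected fraction at most $\frac1q$'' to ``expected number of searches at most $\frac{q}{q-1}$'': for a fixed realisation of $\Hm$ the success probability of one search is $1-\card{\{\cv\in\C:\hw{\cv}=t\}}/\card{\{\cv\in\C':\hw{\cv}=t\}}$, a random quantity, so the honest bound averages the \emph{per-instance} behaviour rather than a ratio of expectations. I would therefore phrase the whole argument in the averaged (expected-count) regime matching the statement ``on average over $\Hm$'', and---if a pointwise bound were wanted---would instead need concentration of both $\card{\{\cv\in\C:\hw{\cv}=t\}}$ and the number of solutions about their means, which is where the real work would lie.
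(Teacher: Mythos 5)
Your proposal is correct and takes essentially the same route as the paper's own proof: the same decomposition of $\C'$ into the $q$ cosets of $\C$, the same scaling bijection showing every nonzero coset carries the same expected number of weight-$t$ words (namely $N_{\mathrm{sol}}$), the same ratio-of-expectations bound of $\frac{1}{q}$ on the event $\alpha=0$, and the same geometric argument giving $\frac{q}{q-1}$ expected searches. The only cosmetic difference is that you bound via $\expect{\Hm}{\card{\left\{ \cv \in \C \; : \; \hw{\cv}=t\right\}}} \le N_{\mathrm{sol}}$ plus monotonicity of $x \mapsto x/(x+c)$, whereas the paper computes $\expect{\Hm}{\card{\left\{ \cv \in \C' \; : \; \hw{\cv}=t\right\}}}$ exactly before bounding; you also make explicit two points the paper leaves implicit (the degenerate case $\sv=\Ov$ and the distinction between a ratio of expectations and the expectation of the per-instance ratio).
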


\begin{proof}
First, by construction of $\sv$ and $\zv$, we know there exists a codeword $\cv \in \C$ and an error vector $\widetilde{\ev} \in \Fq^n$ of weight $t$ such that $\zv = \cv + \widetilde{\ev}$. So we have $\langle\C, \zv \rangle = \langle\C, \widetilde{\ev} \rangle$ and $\sv = \Hm \widetilde{\ev}$.

Let $\ev$ be any solution of the low weight codeword search problem $\SD(\Hm', \Ov, t)$. Then, as said before, $\ev$ is an error vector of weight $t$ such that $\Hm \ev = \alpha \sv$ for a scalar $\alpha \in \Fq$, and $\ev$ induces a solution of the original decoding problem if and only if $\alpha \neq 0$. On average over $\Hm$ and for $\ev$ drawn uniformly at random in the solutions of $\SD(\Hm', \Ov, t)$, the probability that $\alpha = 0$ is
$$
\prob{\Hm, \ev}{\alpha = 0} = \frac{\expect{\Hm}{\card{\C(t)}}}{\expect{\Hm}{\card{\C'(t)}}}
$$
where 
$$
\begin{array}{lcl}
\C(t) & \eqdef & \left\{ \ev \in \C \; : \; \hw{\ev} = t \right\} \\
\C'(t) & \eqdef & \left\{ \ev \in \C' \; : \; \hw{\ev} = t \right\}
\end{array}
$$
We already know that 
$$
\expect{\Hm}{\card{\C(t)}} = \frac{\binom{n}{t}(q-1)^t}{q^{n-k}}.
$$
Let us count $\C'(t)$. We remark that $\C'$ is the disjoint union of the cosets $\C + \alpha \widetilde{\ev}$ for all the $\alpha \in \Fq$. So 
$$
\expect{\Hm}{\card{\C'(t)}} = \sum_{\alpha \in \Fq} \expect{\Hm}{\card{\C'_{\alpha}(t)}}
$$
where
$$
\C'_{\alpha}(t) \eqdef \left\{ \ev + \alpha \widetilde{\ev} \; : \; \ev \in \C \text{ and }\hw{\ev + \alpha \widetilde{\ev}} = t \right\}.
$$
On one hand, we have that $\C'_{0}(t) = \C(t)$. On another hand, for all non-zero $\alpha, \alpha' \in \F_q^*$, the map
$\xv \longmapsto \alpha^{-1} \alpha' \xv$ is a bijection from $\C'_{\alpha}(t)$ to $\C'_{\alpha'}(t)$; so for all $\alpha \in \F_q^*$, $
\card{\C'_{\alpha}(t)}=\card{\C'_{1}(t)}.
$
Thus, we have
$$
\expect{\Hm}{\card{\C'(t)}} = \expect{\Hm}{\card{\C(t)}} + (q-1) \expect{\Hm}{\card{\C'_{1}(t)}}
$$

By doing a calculation similar to that of the Equation \eqref{eq:Nsol}, we can show that
$$
\begin{array}{lcl}
\expect{\Hm}{\card{\C'_{1}(t)}} & = & 1 + \frac{\binom{n}{t}(q-1)^t - 1}{q^{n-k}} \\[0.2cm]
 & = & 1 - \frac{1}{q^{n-k}} + \expect{\Hm}{\card{\C(t)}}.
\end{array}
$$
and so
$$
\expect{\Hm}{\card{\C'(t)}} = q \expect{\Hm}{\card{\C(t)}} + (q-1) \left( 1 - \tfrac{1}{q^{n-k}}\right).
$$
Finally, the probability that the reduction succeeds is
$$
1 - \prob{\Hm, \ev}{\alpha = 0}  =  1 - \frac{1}{q + \tfrac{(q-1)(1 - 1/q^{n-k})}{\expect{\Hm}{\card{\C(t)}}}}  \geq  \frac{q-1}{q}.
$$
\end{proof}

 \section{Stern's algorithm in projective space}
\label{sec:projection}

In Section \ref{sec:reduction}, we gave a reduction of  decoding  to low weight codeword searching. In this section, we address the second problem, that is given a parity-check matrix $\Hm' \in \Fq^{(n-k-1) \times n}$ of $\C'$, we want to find $\ev \in \Fq^{n}$ such that $\hw{\ev} = t$ and $\Hm' \ev = \Ov$.

We have to be careful about the distribution of $\Hm'$ which has not been drawn uniformly at random in $\Fq^{(n-k-1) \times n}$. Indeed, for an error $\widetilde{\ev} \in \Fq^n$ of weight $t$, $\Hm'$ has been drawn such that $\widetilde{\ev}$ is a codeword in $\C'$, so $\Hm'$ verifies $\Hm' \widetilde{\ev} = \Ov$.\\
Essentially, our method consists in running a Stern procedure in the projective space $\projspace{\Fq^n}$. In particular, we show that Peters' improvements of Stern \cite[Ch. 6]{P11a} are still applicable in the projective space.

\subsection{The algorithm}

When the syndrome is zero, Stern's algorithm essentially consists in finding pairs $(\xv_1, \xv_2) \in \Fq^{k+1} \times \Fq^{k+1}$ such that $\xv_1$ (resp. $\xv_2$) is of weight $p$ on its first $\lfloor \frac{k+1}{2} \rfloor$ (resp. last $k+1 - \lfloor \frac{k+1}{2} \rfloor$) positions, zero elsewhere and 
\begin{equation}
\Rm' \xv_1 = \Rm' \xv_2
\end{equation}
where $\Rm'$ are the $\ell$ first rows of $\Pm' \eqdef \Hm'^{-1}_J \Hm'_I$. Each pair $(\xv_1, \xv_2)$ that collides gives a candidate codeword $\ev$ defined by
\begin{equation}
\ev_I = \xv_1 - \xv_2 \;\;\;\;\; \text{ and }\;\;\;\;\; \ev_J = \Pm' (\xv_1 - \xv_2).
\end{equation}

One can remark that if the pair $(\xv_1, \xv_2)$ is a solution to the collision search, then for all $\alpha \in \Fq^*$, $\alpha \xv_1$ and $\alpha \xv_2$ also collide.

\begin{remark}
Note that this trick is specific to the fact that the syndrome is zero. If the syndrome is non-zero, then given a pair $(\xv_1,\xv_2)$ that is such that $\Rm' \xv_1 - \yv' = \Rm' \xv_2$, we can no longer guarantee that for any non-zero $\alpha$, we still have $\alpha \Rm' \xv_1 - \yv' = \alpha \Rm' \xv_2$. That is the reason why the reduction in Section \ref{sec:reduction} is essential.
\end{remark}

Moreover, $\alpha \xv_1$ and $\alpha \xv_2$ respectively share the same support as $\xv_1$ and $\xv_2$ so the Stern procedure enumerates all the collinear equivalents of $\xv_1$ and $\xv_2$ and consequently, it explores all the candidate codewords that are collinear to $\ev$. However, we only need one of them. Indeed, if $\ev$ is in $\C'$ but not in $\C$ -- that is $\Hm' \ev = \Ov$ but $\Hm \ev \neq \Ov$ -- then there is a unique $\alpha \in \Fq^*$ such that the syndrome $\alpha \Hm \ev$ is exactly $\sv$ and not a multiple of it. The solution $\alpha \ev$ can be found from any vector that is collinear to $\ev$ and so, we only need to find one of them.\\

From the discussion above, we remark that when the syndrome is zero, Stern's algorithm can be run in the projective space. For a space $\mathcal{E}$ over $\Fq$, the projective space $\projspace{\mathcal{E}}$ is the quotient set of $\mathcal{E}$ by the \emph{equivalence relation} $\sim$:
\begin{equation}
\forall \xv,\yv \in \mathcal{E}, \;\;\; \xv \sim \yv \; \Longleftrightarrow \; \exists \alpha \in \Fq^*, \; \xv = \alpha \yv.
\end{equation}

The \emph{equivalence class} of a vector $\xv \in \mathcal{E}$ is denoted by:
\begin{equation}
\class{\xv} \eqdef \left\{ \yv \in \mathcal{E} \; : \; \yv \sim \xv\right\}.
\end{equation}
And so 
\begin{equation}
\projspace{\mathcal{E}} \eqdef \left\{ \class{\xv} \; : \; \xv \in \mathcal{E} \right\}.
\end{equation}

Now, if $\xv_1, \xv_2 \in \Fq^{k+1}$ are such that $\Rm' \xv_1 = \Rm' \xv_2$ then we also have $\Rm' \class{\xv_1} = \Rm' \class{\xv_2}$ where $\class{\xv_1}$ and $\class{\xv_2}$ live in $\projspace{\Fq^{k+1}}$. 
But we have to note that if $\Rm' \class{\xv_1} =  \Rm' \class{\xv_2}$ then we do not necessarily have $\Rm' \xv_1 =  \Rm' \xv_2$. So we need to choose representatives $\rep{\xv_1} \in \class{\xv_1}$ and $\rep{\xv_2} \in \class{\xv_2}$ which guarantee
\begin{equation}
\Rm' \rep{\xv_1} =  \Rm' \rep{\xv_2}
\end{equation}

To do that, we distinguish a particular class representative:
\begin{definition}[Particular class representative]
\label{def:repr}
Let $\Rm' \in \Fq^{\ell \times (k+1)}$. For all $\class{\xv} \in \projspace{\Fq^{k+1}}$, if $\Rm' \xv = \Ov$ then the vector $\rep{\xv}$ is any representative of $\class{\xv}$, otherwise it is the unique representative of $\class{\xv}$ such that the first non-zero symbol of $\Rm' \rep{\xv}$ is $1$. 
\end{definition}

\begin{lemma}
\label{lemma:repr}
For any $\xv_1, \xv_2 \in \Fq^{k+1}$, 
\begin{equation}
\Rm' \class{\xv_1} =  \Rm' \class{\xv_2} \;\;\; \Longleftrightarrow \;\;\; \Rm' \rep{\xv_1} =  \Rm' \rep{\xv_2}.
\end{equation}
\end{lemma}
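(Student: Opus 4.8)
The plan is to prove the two implications separately, noting first that the statement is really about the well-defined induced map $\class{\xv} \mapsto \Rm' \class{\xv} \eqdef \class{\Rm' \xv}$ on projective space, so that $\Rm' \class{\xv_1} = \Rm' \class{\xv_2}$ means precisely that $\Rm' \xv_1$ and $\Rm' \xv_2$ are collinear (either both equal to $\Ov$, or differing by a nonzero scalar). The right-to-left implication will be essentially immediate, and all the content will sit in the left-to-right implication, which I would settle by a normalization argument keyed to Definition \ref{def:repr}.

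For the easy direction, assume $\Rm' \rep{\xv_1} = \Rm' \rep{\xv_2}$. Since $\rep{\xv_i}$ is a scalar multiple of $\xv_i$, one has $\class{\xv_i} = \left\{ \alpha \rep{\xv_i} \; : \; \alpha \in \Fq^* \right\}$, and applying the linear map $\Rm'$ gives $\Rm' \class{\xv_i} = \left\{ \alpha \, \Rm' \rep{\xv_i} \; : \; \alpha \in \Fq^* \right\}$. Hence the two image classes coincide the moment the images of the chosen representatives do, which yields $\Rm' \class{\xv_1} = \Rm' \class{\xv_2}$.

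For the main direction, I would assume $\Rm' \class{\xv_1} = \Rm' \class{\xv_2}$ and split on whether $\Rm' \xv_1 = \Ov$. If $\Rm' \xv_1 = \Ov$, collinearity forces $\Rm' \xv_2 = \Ov$, and Definition \ref{def:repr} directly gives $\Rm' \rep{\xv_1} = \Ov = \Rm' \rep{\xv_2}$. Otherwise $\Rm' \xv_1 \neq \Ov$, hence also $\Rm' \xv_2 \neq \Ov$, so Definition \ref{def:repr} normalizes both $\rep{\xv_1}$ and $\rep{\xv_2}$ so that the first nonzero coordinate of each $\Rm' \rep{\xv_i}$ equals $1$. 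Let $\beta \in \Fq^*$ be the unique scalar witnessing collinearity, $\Rm' \rep{\xv_1} = \beta \, \Rm' \rep{\xv_2}$. Two proportional nonzero vectors have the same support and therefore the same first nonzero position $j$; comparing that coordinate gives $1 = (\Rm' \rep{\xv_1})_j = \beta \, (\Rm' \rep{\xv_2})_j = \beta$, so $\beta = 1$ and $\Rm' \rep{\xv_1} = \Rm' \rep{\xv_2}$, as wanted.

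The only point needing care, and the closest thing to an obstacle, is the well-definedness underlying Definition \ref{def:repr}: when $\Rm' \xv \neq \Ov$ one must confirm that there is a \emph{unique} representative normalizing the leading nonzero entry of $\Rm' \xv$ to $1$. This follows because rescaling $\xv$ by $\alpha \in \Fq^*$ rescales $\Rm' \xv$ by the same $\alpha$, and a nonzero vector has a well-defined first nonzero position whose value can be scaled to $1$ by exactly one choice of $\alpha$. Once this is in place, everything else is a direct consequence of the linearity of $\Rm'$ and of the definition of $\sim$.
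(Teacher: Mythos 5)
Your proof is correct and follows essentially the same route as the paper's: both arguments reduce to the observation that all elements of $\Rm' \class{\xv_i}$ share the same first non-zero position, so the normalization of Definition \ref{def:repr} singles out a unique representative, which forces your scalar $\beta$ to equal $1$ (the paper phrases this as uniqueness of the element with leading symbol $1$ in the class, but it is the same computation). Your explicit check of well-definedness and the scalar-cancellation treatment of the easy direction likewise mirror the paper's opening remarks and its converse implication.
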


\begin{proof} If $\Rm' \xv_1 =  \Rm' \xv_2 = \Ov$ then the proof is trivial. Otherwise, for either $i=1$ or $2$, $\Rm' \class{\xv_i}$ is made of all the vectors that are collinear to $\Rm' \xv_i$. Thus, all the elements in $\Rm' \class{\xv_i}$ share the same support, in particular they have the same first non-zero position, and there is a unique vector in $\Rm' \class{\xv_i}$ for which this first non-zero position contains a $1$. So, we first notice that $\rep{\xv_1}$ and $\rep{\xv_2}$ exist and they are unique. 

Assume $\Rm' \class{\xv_1} =  \Rm' \class{\xv_2}$. That means $\Rm' \rep{\xv_1} \in  \Rm' \class{\xv_2}$. On another hand, the first non-zero symbol in $\Rm' \rep{\xv_1}$ is a one and the only element of this kind in $\Rm' \class{\xv_2}$ is $\Rm' \rep{\xv_2}$, so we necessarily have $\Rm' \rep{\xv_1} = \Rm' \rep{\xv_2}$.

Conversely, $\Rm' \rep{\xv_1} =  \Rm' \rep{\xv_2} \; \Rightarrow \; \class{\Rm' \rep{\xv_1}} =  \class{\Rm' \rep{\xv_2}} \; \Rightarrow \; \Rm' \class{\xv_1} =  \Rm' \class{\xv_2}$.

\end{proof}

Now we are ready to describe our adaptation of Stern's algorithm to the projective space. Algorithm \ref{algo:SternProj} gives the pseudo code of the method. Note that unlike Algorithm \ref{algo:Stern}, here the syndrome is zero and $\rep{\xv_1}, \rep{\xv_2} \in \Fq^{k+1}$ are some particular representatives of $\class{\xv_1}, \class{\xv_2} \in \projspace{\Fq^{k+1}}$. Moreover, we must treat differently the case where $\Rm' \rep{\xv_1} = \Rm' \rep{\xv_2} = \Ov$ because this case generates $q-1$ collisions that are not collinear with each other.
Lemma \ref{lemma:repr} guarantees that a collision in projective space is still a collision when using the good representative so that guarantees the correctness of the algorithm.

\begin{algorithm}
    \DontPrintSemicolon 

    \KwIn{$\Hm' \in \Fq^{(n-k-1) \times n}$ and $t \in \IInt{0}{n}$.}
    \KwParams{$p \in \IInt{0}{\frac{\min(t,k+1)}{2}}$ and $\ell \in \IInt{0}{n - k - 1 - t + 2p}$.}
    \KwOut{$\ev \in \Fq^n$ such that $\Hm' \ev = \Ov$ and $\hw{\ev}=t$.}
	\vspace{0.2cm}
	 \RepAsNecessary{}{
	 	draw $I \subseteq \IInt{1}{n}$ of size $k+1$ uniformly at random \label{algomark:infoProj}\;
	 	$J \gets \IInt{1}{n} \setminus I$ \;
	 	$\Pm' \gets \Hm'^{-1}_J\Hm'_I$ \label{algomark:GaussProj}\tcc*{\small if $\Hm'_J$ is not invertible, go back to step \ref{algomark:info}}
	 	$\Rm' \gets$ the $\ell$ first rows of $\Pm'$ \label{algomark:ellProj} \;
	 	$\sL'_1 \gets \left\lbrace \Rm' \rep{\xv_1} \; : \; \class{\xv_1} \in \projspace{\left(\Fq^{\lfloor \frac{k+1}{2}\rfloor} \times 0^{k+1 - \lfloor \frac{k+1}{2}\rfloor} \right)} \text{ and } \hw{\xv_1} = p \right\rbrace$ \;
		$\sL'_2 \gets \left\lbrace  \Rm' \rep{\xv_2} \; : \; \class{\xv_2} \in \projspace{\left(0^{\lfloor \frac{k+1}{2}\rfloor} \times \Fq^{k+1 - \lfloor \frac{k+1}{2}\rfloor}\right)} \text{ and } \hw{\xv_2} = p \right\rbrace$ \;
		\ForAll{$\left(\Rm' \rep{\xv_1}, \Rm' \rep{\xv_2} \right) \in \sL'_1 \times \sL'_2$ such that $\Rm' \rep{\xv_1}= \Rm' \rep{\xv_2}$\label{algomark:bouclecheckproj}}{
			\uIf{$\Rm' \rep{\xv_1} = \Ov \textbf{\em \ and }\exists \alpha \in \Fq^*, \hw{ \Pm' (\alpha \rep{\xv_1} - \rep{\xv_2})} = t - 2p$}{
						\Return $\ev$ such that $\ev_I = \alpha\rep{\xv_1} - \rep{\xv_2}$ and $\transp{\ev_J} = \Pm' (\alpha\rep{\xv_1} - \rep{\xv_2})$ \;
			}\ElseIf{$\hw{ \Pm' (\rep{\xv_1} - \rep{\xv_2})} = t - 2p$ \label{algomarker:checkProj}}{
				\Return $\ev$ such that $\ev_I = \rep{\xv_1} - \rep{\xv_2}$ and $\transp{\ev_J} = \Pm' (\rep{\xv_1} - \rep{\xv_2})$ \label{algomark:returnProj} \;
			}
		}
	 }

    \caption{Projective Stern's algorithm to solve $\SD(\Hm',\Ov,t)$\label{algo:SternProj}}
\end{algorithm}

\subsection{Reducing the cost of Gaussian elimination}
\label{sec:Gauss}

For large $q$ and $p$, 
the Gaussian elimination step is negligible and so, we can afford to perform it on $n-k-1$ columns drawn independently at each iteration. Thus, Gaussian elimination needs $(n-k-1)^2 (n + k + 2)$ operations. However, in the context of \SDitH, we are far away from this regime and the Gaussian elimination is actually one of the most expensive operation we have to perform. In this sub-section, we present two modifications of our original projective Stern algorithm that allow to reduce the impact of the Gaussian elimination step. Those tricks are inspired by \cite{P10} and \cite{BLP08} and have been adapted to our situation.\\

\textbf{Factorizing the Gaussian elimination step.} An iteration of Algorithm \ref{algo:SternProj} begins with selecting an information set $I$ and a window of size $\ell$. Let denote by $I_1$ (resp. $I_2$) the first half of $I$ (resp. the second half of $I$) and $J_{\ell}$ the $\ell$ first positions of $J \eqdef \IInt{1}{n}\setminus I$. The iteration succeeds in finding the particular error vector $\ev$ of weight $t$ if it verifies
\begin{equation}
\hw{\ev_{I_1}} = \hw{\ev_{I_2}} = p \;\;\;\;\; \text{ and }\;\;\;\;\; \hw{\ev_{J_{\ell}}} = 0.
\end{equation}

To save some Gaussian elimination steps, we can test several partitions $(I_1, I_2, J_{\ell})$ for one given information set. In other words, the main loop in Algorithm \ref{algo:SternProj} can be divided into an \emph{outer loop} and an \emph{inner loop}. The \emph{outer loop} consists in selecting an information set $I$ and performing a Gaussian elimination on it (steps \ref{algomark:infoProj}-\ref{algomark:GaussProj}). The \emph{inner loop} starts by partitioning $I$ into $(I_1, I_2)$ and selecting a window $J_{\ell} \subset J$ of size $\ell$, then it performs the steps \ref{algomark:ellProj}-\ref{algomark:returnProj} with
\begin{eqnarray}
\Rm' & \eqdef & \text{ The rows of $\Pm'$ indexed by $J_{\ell}$} \\
\sL'_1 & \eqdef & \left\lbrace \Rm' \rep{\xv_1} \; : \; \class{\xv_1} \in \projspace{\Fq^{k+1}} \text{ and } \support{\xv_1} \subseteq I_1 \text{ and } \hw{\xv_1} = p \right\rbrace \\
\sL'_2 & \eqdef & \left\lbrace  \Rm' \rep{\xv_2} \; : \; \class{\xv_2} \in \projspace{\Fq^{k+1}} \text{ and } \support{\xv_2} \subseteq I_2 \text{ and } \hw{\xv_2} = p \right\rbrace
\end{eqnarray}

For a given information set $I$, we choose the partition $(I_1, I_2, J_{\ell})$ uniformly at random and independently from one iteration to another. Assuming we are looking for a $t$-weight codeword $\ev \in \C'$ that verifies 
$\hw{\ev_I} = 2p$, then the success probability of finding this particular codeword during an iteration of the inner loop is
\begin{equation}
q_{\mathrm{in}} = \dfrac{\binom{\lfloor \frac{k+1}{2}\rfloor}{p} \binom{k+1 - \lfloor \frac{k+1}{2}\rfloor}{p} \binom{n - k - 1 - \ell}{t - 2p} }{\binom{k + 1}{2p} \binom{n - k - 1}{t - 2p}}. \label{eq:qin}
\end{equation}

So the number of trials needed to get $\ev$ follows a geometric distribution of parameter $q_{\mathrm{in}}$ and so, by iterating $N_{\mathrm{in}}^{\mathrm{tmp}}$ times the inner loop, we will find $\ev$ with probability
\begin{equation}
p_{\mathrm{in}} \eqdef 1 - (1 - q_{\mathrm{in}})^{N_{\mathrm{in}}^{\mathrm{tmp}}}
\end{equation}
Note that taking
\begin{equation}
N_{\mathrm{in}}^{\mathrm{tmp}}\eqdef \frac{1}{q_{\mathrm{in}}}
\end{equation}
allows to achieve a success probability $p_{\mathrm{in}}$ for the inner loop that is exponentially close to $1$.\\

\textbf{Reusing pivots in the Gaussian elimination.} In \cite{CC98}, Canteaut and Chabaud propose to simplify the Gaussian elimination step by changing only one index in the information set $I$. Thus, only one pivot is necessary from one iteration of the outer loop to another. This idea is generalized in \cite{BLP08} where this time, the number of columns to eliminate from one iteration to another can be greater than $1$. By doing this, we reduce the cost of Gaussian elimination but we also induce some dependencies between the selected information sets that impact the number of iterations of the outer loop that is needed.

To estimate the impact of the technique described above, we lean on the analysis in \cite{BLP08,P10}. We first introduce the parameter $c$ which represents the number of columns to eliminate in each iteration\footnote{In \cite{BLP08}, another parameter $r$ is introduced but its interest is only  for small field.}. Thus, the cost of Gaussian elimination per iteration of the outer loop is
\begin{eqnarray}
T_{\mathrm{Gauss}} & = & 2 \sum_{i=1}^{c} (n - k - 1)(k + 1 + i) \\
 & = & c(n-k-1)(2k+c+3) \label{eq:TGaussProj}
\end{eqnarray}

Note that if a $t$-weight codeword $\ev \in \C'$ is such that $\hw{\ev_I} = 2p$, then the corresponding iteration of the outer loop will find a representative of $\class{\ev}$ with probability $p_{\mathrm{in}}$. So we need to count the average number of iterations of the outer loop that we need for having this particular weight distribution. However, there are some dependencies between the iterations that must be taken into account. Indeed, we do not draw the $k+1$ positions of the information set independently from one iteration to another ($k+1-c$ positions are kept).

The situation can be modeled by a $(t+2)$-state absorbing Markov chain. Given a $t$-weight codeword $\ev \in \C'$, let $X_i$ be the random variable that represents the weight of $\ev_I$ at iteration $i \in \NN$ of the outer loop or ``$\done$'' if the previous iteration succeeds. For the first iteration, the information set $I$ is chosen uniformly at random as a subset of $\IInt{1}{n}$ of size $k$. So the distribution of $X_0$ is given by

\begin{equation}
\forall v \in \IInt{0}{t}, \;\;\; \prob{}{X_0 = v} = \frac{\binom{k+1}{v} \binom{n - k - 1}{t - v}}{\binom{n}{t}} \;\;\;\;\; \text{ and } \;\;\;\;\; \prob{}{X_0 = \done} = 0.
\end{equation}

Let $\mat{\Pi}$ be the transition matrix of the Markov chain. It is defined as the following stochastic matrix:
\begin{equation}
\forall (u,v) \in \{\done, 0,\cdots, t \}^2, \;\; \mat{\Pi}[u,v] \eqdef \prob{}{X_{i+1} = v \; \vert \; X_i = u}
\end{equation}

The state $\done$ is the absorbing state, that means when we are in this state, we cannot get out anymore. So we have
\begin{equation}
\forall v \in \IInt{0}{t}, \;\; \mat{\Pi}[\done,v] = 0 \;\;\;\;\; \text{ and } \;\;\;\;\; \mat{\Pi}[\done,\done] = 1.
\end{equation}

From an iteration to another, the information set $I$ is updated by swapping $c$ indexes drawn uniformly at random in $I$ with $c$ indexes drawn uniformly at random in $J$. So an iteration moves from state $u$ to state $v$ with probability
\begin{equation}
\label{eq:transition}
\mat{\Pi}[u,v] =   \sum_{j}\dfrac{\binom{u}{j} \binom{k + 1 - u}{c - j} \binom{t - u}{v - u + j} \binom{n - k - 1 - t + u}{c - v + u - j}}{\binom{k+1}{c} \binom{n-k-1}{c}}
\end{equation}
except for $u = 2p$ because then the algorithm succeeds with probability:
\begin{equation}
\mat{\Pi}[2p,\done] =  p_{\mathrm{in}}.
\end{equation}
So for all $v \in \IInt{0}{t}$:
\begin{equation}
\mat{\Pi}[2p,v] = \left( 1 - p_{\mathrm{in}} \right) \cdot \sum_{j} \dfrac{ \binom{2p}{j} \binom{k + 1 - 2p}{c - j} \binom{t - 2p}{v - 2p + j} \binom{n - k - 1 - t + 2p}{c - v + 2p - j}}{\binom{k+1}{c} \binom{n-k-1}{c}}.
\end{equation}

Finally, to determine the number of iterations needed to get the first success, one only has to compute the fundamental matrix associated to $\mat{\Pi}$:
\begin{equation}
\mat{F} \eqdef \left( \Id_{t+1} - \mat{\Pi'} \right)^{-1}
\end{equation}
where $\Id_{t+1}$ is the identity matrix of size $t+1$ and $\mat{\Pi'}$ is the $(t+1) \times (t+1)$ sub-matrix of $\mat{\Pi}$ such that
\begin{equation}
\forall (u,v) \in \{0,\cdots,t\}^2, \;\; \mat{\Pi'}[u,v] \eqdef \mat{\Pi}[u,v].
\end{equation} 
Then, the average number of iterations of the outer loop needed to find a representative of $\class{\ev}$ is
\begin{equation}
N_{\mathrm{out}}^{\mathrm{tmp}} = \sum_{u = 0}^{t} \sum_{v = 0}^{t} \prob{}{X_0 = v}  \mat{F}[u,v]. \label{eq:Nout_tmp}
\end{equation}

\textbf{Finding one solution from many.} With $N_{\mathrm{out}}^{\mathrm{tmp}} \cdot N_{\mathrm{in}}^{\mathrm{tmp}}$ repetitions of the inner loop, we are able to find one particular $t$-weight projective codeword $\class{\ev} \in \projspace{\C'}$. But there is potentially more than one such projective codeword since this number is
\begin{eqnarray}
N_{\mathrm{sol}} & = & \expect{\Hm'}{\card{\left\{ \class{\ev} \in \projspace{\Fq^n} \; : \; \hw{\ev}=t \text{ and } \Hm' \class{\ev} = \class{\Ov} \right\}}} \\
 & = & 1 + \frac{\binom{n}{t}(q-1)^{t-1} - 1}{q^{n-k-1}} \label{eq:Nsol_proj}
\end{eqnarray}
Because we want to find only one solution from the $N_{\mathrm{sol}}$ ones, we actually need to approximately iterate the outer loop $N_{\mathrm{out}}$ times and for each iteration of the outer loop, we iterate the inner loop $N_{\mathrm{in}}$ times where
\begin{eqnarray}
N_{\mathrm{out}} & \eqdef & \max \left(1, \frac{N_{\mathrm{out}}^{\mathrm{tmp}}}{N_{\mathrm{sol}}} \right) \label{eq:Nout}\\
N_{\mathrm{in}} & \eqdef & \max \left(1, N_{\mathrm{in}}^{\mathrm{tmp}} \cdot \min \left( 1 , \frac{N_{\mathrm{out}}^{\mathrm{tmp}}}{N_{\mathrm{sol}}}\right) \right). \label{eq:Nin}
\end{eqnarray}

\subsection{Complexity of our projective Stern decoding}

Finally, considering the modifications of the previous sub-section and using the implementation tricks of Peters \cite{P10}, we are able to state the following Theorem \ref{th:complexitySternProj} that gives the complexity of Stern's algorithm in projective space.

\begin{theorem}\label{th:complexitySternProj}
Let $\widetilde{\ev} \in \Fq^{n}$ be such that $\hw{\ev} = t$. On average over the choice of $\Hm' \in \Fq^{(n-k-1)\times n}$ that is such that $\Hm' \widetilde{\ev} = \Ov$, we can solve the low weight codeword search problem $\SD(\Hm', \Ov, t)$ with a running time of order
\begin{equation}
T_{\mathrm{Stern-proj}} = N_{\mathrm{out}} \Big( T_{\mathrm{Gauss}} +  N_{\mathrm{in}} \big( T_{\mathrm{lists}} + T_{\mathrm{check}} \big) \Big)
\end{equation}
where $N_{\mathrm{out}}$, $N_{\mathrm{in}}$ and $T_{\mathrm{Gauss}}$ are given by Equations \eqref{eq:Nout}, \eqref{eq:Nin} and \eqref{eq:TGaussProj},  and
\begin{eqnarray}
L_1 & = & \binom{\lfloor\frac{k+1}{2}\rfloor}{p}(q-1)^{p-1} \label{eq:list1_proj}\\
L_2 & = & \binom{k + 1 - \lfloor\frac{k+1}{2}\rfloor}{p}(q-1)^{p-1}\label{eq:list2_proj}\\
T_{\mathrm{lists}} & = & \ell \left( k + 2p - 1 + 2 \left( L_1 + L_2 \right) \right)  \\
N_{\mathrm{collisions}} & = & \dfrac{(q-1) L_1 L_2}{q^{\ell}} \\
T_{\mathrm{check}} & = & \left(2p + \tfrac{q}{q-1}(t-2p+1) 2p\left( 1 + \tfrac{q-2}{q-1}\right)\right) \cdot N_{\mathrm{collision}} 
\end{eqnarray}
\end{theorem}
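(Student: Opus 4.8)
The plan is to account, term by term, for the cost of Algorithm \ref{algo:SternProj} once it has been reorganized into the outer/inner loop structure of Section \ref{sec:Gauss}. The top-level multiplicative shape $N_{\mathrm{out}}\big(T_{\mathrm{Gauss}} + N_{\mathrm{in}}(T_{\mathrm{lists}} + T_{\mathrm{check}})\big)$ follows directly from that reorganization: each of the $N_{\mathrm{out}}$ iterations of the outer loop pays once for the Gaussian elimination $T_{\mathrm{Gauss}}$ of Equation \eqref{eq:TGaussProj}, and then runs the inner loop $N_{\mathrm{in}}$ times, each inner pass costing the list-building time $T_{\mathrm{lists}}$ plus the collision-checking time $T_{\mathrm{check}}$. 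The counts $N_{\mathrm{out}}$ and $N_{\mathrm{in}}$ are precisely those already derived in Equations \eqref{eq:Nout}--\eqref{eq:Nin} from the absorbing Markov chain analysis together with the ``finding one solution from many'' correction by $N_{\mathrm{sol}}$, so nothing further need be proved for them and they may be quoted directly.

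It then remains to establish the four inner quantities. First I would count the list sizes: a vector $\xv_1$ of weight $p$ supported on the $\lfloor\frac{k+1}{2}\rfloor$ first coordinates can be chosen in $\binom{\lfloor\frac{k+1}{2}\rfloor}{p}(q-1)^p$ ways, and since $\sL'_1$ ranges over projective classes in $\projspace{\Fq^{k+1}}$ we divide by the $q-1$ nonzero scalars, giving $L_1 = \binom{\lfloor\frac{k+1}{2}\rfloor}{p}(q-1)^{p-1}$, and symmetrically $L_2$. This division is exactly where working projectively buys the factor $q-1$ over the affine Stern algorithm. Next I would bound the list-building cost $T_{\mathrm{lists}}$ by invoking the incremental evaluation trick of Peters \cite{P10}: all products $\Rm'\rep{\xv_i}$ for the weight-$p$ vectors in one half are computed by reusing partial sums, so that each of the $\ell$ rows of $\Rm'$ contributes the enumeration term $k+2p-1$ for the intermediate sums together with the $2(L_1+L_2)$ term for finalizing and then normalizing each representative according to Definition \ref{def:repr}.

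For the collision count, I would argue that the normalized images $\Rm'\rep{\xv_i}$ lie in the projective image of $\Fq^{\ell}$, whose nonzero part has $(q^{\ell}-1)/(q-1)$ elements, so two independent such vectors coincide with probability about $(q-1)/q^{\ell}$, whence $N_{\mathrm{collisions}} = (q-1)L_1L_2/q^{\ell}$. Lemma \ref{lemma:repr} is the ingredient guaranteeing that each projective collision $\Rm'\class{\xv_1}=\Rm'\class{\xv_2}$ is a genuine equality $\Rm'\rep{\xv_1}=\Rm'\rep{\xv_2}$ once the canonical representatives are used, so that no solution is lost and the count is exact up to the expectation.

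The step I expect to be the most delicate is the check cost $T_{\mathrm{check}}$. For each of the $N_{\mathrm{collisions}}$ surviving pairs one evaluates the weight of $\Pm'(\rep{\xv_1}-\rep{\xv_2})$ on the remaining $n-k-1-\ell$ coordinates, and the factor $\frac{q}{q-1}(t-2p+1)2p\big(1+\frac{q-2}{q-1}\big)$ is obtained by an early-abort analysis: one halts as soon as the running Hamming weight exceeds $t-2p$, and one must compute the expected number of field operations before this occurs, accounting both for the weight $2p$ of the difference vector and for the probability $\frac{q-1}{q}$ that a coordinate of a random vector is nonzero. This is the one place where a genuine probabilistic expectation rather than a direct enumeration enters, and reconciling it with the projective normalization used to form the lists is the subtle point; the remaining terms are routine counts that mirror the affine analysis of \cite[Ch.~6]{P11a}.
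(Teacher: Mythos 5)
Your proposal follows the paper's proof along essentially the same route: quote $N_{\mathrm{out}}$, $N_{\mathrm{in}}$ and $T_{\mathrm{Gauss}}$ from the outer/inner-loop analysis of Section \ref{sec:Gauss}, count projective list sizes, build the lists by Peters-style incremental column additions, count collisions, and cost the check with an early-abort argument. The outline and all the non-projective bookkeeping match. However, two pieces of the accounting are left unresolved, and they are precisely the projective-specific items that the paper's proof supplies.

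First, the standalone additive $2p$ in $T_{\mathrm{check}} = \left(2p + \tfrac{q}{q-1}(t-2p+1)\,2p\left(1+\tfrac{q-2}{q-1}\right)\right)\cdot N_{\mathrm{collision}}$: you fold the whole expression into the early-abort analysis and flag the reconciliation with the projective normalization as ``the subtle point'' without resolving it. The resolution is concrete: the lists do not store the representatives $\rep{\xv_i}$ of Definition \ref{def:repr} but pairs $(\alpha_i,\repbis{\xv_i})$, where $\repbis{\xv_i}$ is the representative whose own first nonzero symbol is $1$ (cheap to enumerate) and $\alpha_i\in\Fq^*$ is the saved scalar with $\rep{\xv_i}=\alpha_i\repbis{\xv_i}$. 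When a collision occurs, one must first apply these scalars to the $2p$ nonzero coordinates of the colliding pair to recover the correct representatives, which costs exactly $2p$ multiplications per collision, and only then run the early-abort row checks. Second, your collision count models the normalized images as uniform over the $(q^{\ell}-1)/(q-1)$ nonzero projective points, which gives a coincidence probability of $(q-1)/(q^{\ell}-1)$, not $(q-1)/q^{\ell}$; the stated formula is exact, not approximate, and obtaining it requires the zero-image branch that your count omits: when $\Rm'\rep{\xv_1}=\Rm'\rep{\xv_2}=\Ov$, the algorithm tries all $\alpha\in\Fq^*$, so each such pair contributes $q-1$ candidate checks rather than one. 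Weighting the zero-zero event (probability $1/q^{2\ell}$ per pair) by $q-1$ and adding the nonzero-coincidence term $(q-1)(q^{\ell}-1)/q^{2\ell}$ yields exactly $(q-1)/q^{\ell}$ per pair, hence $N_{\mathrm{collisions}}=(q-1)L_1L_2/q^{\ell}$. Neither fix is difficult, but both are needed to derive the formulas as stated rather than approximations of them.
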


\begin{proof}
According to the Sub-section \ref{sec:Gauss}, we have to iterate $N_{\mathrm{out}}$ times the outer loop which consists in a Gaussian elimination over $c$ columns and $N_{\mathrm{in}}$ iterations of the inner loop. All that remains is to determine the cost of one iteration of the inner loop. To perform this iteration optimally, we will use Peters' implementation tricks \cite{P10}.

To build the lists $\sL_1$ and $\sL_2$, we need to define another representative of an equivalence class in $\projspace{\Fq^{k+1}}$. Let $\xv \in \Fq^{k+1}$, we denote by $\repbis{\xv}$, the representative of $\class{\xv}$ whose first non-zero symbol is $1$. Thus, for $\sL_1$, we produce successively the representatives $\repbis{\xv_1}$ that have a weight $p$ on the first half and zero elsewhere using exactly the same trick as Peters (except we fix the first non-zero symbol to $1$ and that there is no syndrome to add). So we can compute successively $\Pm' \repbis{\xv_1}$ by only adding one column or two consecutive columns (except for the first element that needs $2p-2$ column additions). The single column additions allow to browse all the vectors of a same support and the two columns additions allow to move from one support to another. The two columns additions can actually be replaced by single column additions if we pre-compute all the sums of two consecutive columns in $\Pm'$. This costs $k+1$ column additions. By fixing the first non zero symbol to one, we browse only one representative per equivalence class. However, it is not the good representative that we defined in Definition \ref{def:repr}. But it is quite easy to find the factor $\alpha \in \Fq^*$ such that $\rep{\xv_1} = \alpha \repbis{\xv_1}$ by multiplying one column by a scalar. Then, we do not compute $\rep{\xv_1}$ but we save the pair $(\alpha,\repbis{\xv_1})$ instead.
We proceed similarly for the list $\sL_2$. So finally, the cost of producing $\sL_1$ and $\sL_2$ is
\begin{eqnarray}
T_{\mathrm{lists}} & = & \ell (k+1) + \ell (2p - 2) + 2 \ell  \left( \binom{\lfloor\frac{k+1}{2}\rfloor}{p} + \binom{k + 1 - \lfloor\frac{k+1}{2}\rfloor}{p} \right)(q-1)^{p-1} \\
 & = & \ell \left( k + 2p - 1 + 2\left( \binom{\lfloor\frac{k+1}{2}\rfloor}{p} + \binom{k + 1 - \lfloor\frac{k+1}{2}\rfloor}{p} \right)(q-1)^{p-1} \right) \label{eq:TListsProj}
\end{eqnarray}

To deal with the collisions we first need to count them. 
On average, there are
\begin{eqnarray}
N_{\mathrm{collision}} & = & \dfrac{\binom{\lfloor\frac{k+1}{2}\rfloor}{p} \binom{k + 1 - \lfloor\frac{k+1}{2}\rfloor}{p} (q-1)^{2p-2}}{1 + (q^{\ell} -1)/(q-1)} \\
 & & \;\;\;\;\;\;\;\;\;\; \cdot \left((q-1)\prob{}{\Pm' \rep{\xv_1} = \Ov} + \prob{}{\Pm' \rep{\xv_1} \neq \Ov} \right) \\
& = & \dfrac{\binom{\lfloor\frac{k+1}{2}\rfloor}{p} \binom{k + 1 - \lfloor\frac{k+1}{2}\rfloor}{p} (q-1)^{2p-2}}{1 + (q^{\ell} -1)/(q-1)} \cdot \left(\frac{q-1}{q^{\ell}} + 1 - \frac{1}{q^{\ell}} \right) \\
& = & \dfrac{\binom{\lfloor\frac{k+1}{2}\rfloor}{p} \binom{k + 1 - \lfloor\frac{k+1}{2}\rfloor}{p} (q-1)^{2p-1}}{q^{\ell}}
\end{eqnarray}
For each collision, we must first apply the scalar multiplication to change the representative of the equivalence class. This costs $2p$ multiplications. Then we can apply the same trick as Peters and check for only $\frac{q}{q-1}(t-2p+1)$ rows on average. The cost to treat a row is $2p$ additions and $2p\frac{q-2}{q-1}$ multiplications. So the cost to check all the candidates (each coming from a collision) is
\begin{equation}
T_{\mathrm{check}} = \left(2p + \tfrac{q}{q-1}(t-2p+1) 2p\left( 1 + \tfrac{q-2}{q-1}\right)\right) \cdot N_{\mathrm{collision}}.\label{eq:TCheckProj}
\end{equation}
\end{proof}

 \section{The $d$-split decoding problem}
\label{sec:dsplit}

The security of \SDitH is actually based on the $d$-split decoding problem. Before stating this problem, let us bring in the following notation:
\begin{notation}
Let $\vv \in \Fq^n$. For $d$ that divides $n$ and for all $i \in \IInt{1}{d}$, we denote by $\vv_{[i]}$ the $i^{\text{th}}$ piece of $\vv$ of length $\frac{n}{d}$. More formally,
\begin{equation}
\vv_{[i]} \eqdef \vv_{\IInt{(i-1) \frac{n}{d} + 1}{i\frac{n}{d}}} \eqdef (v_j)_{j \in \IInt{(i-1) \frac{n}{d} + 1}{i\frac{n}{d}}}
\end{equation}
\end{notation}

Then the $d$-split syndrome decoding problem can be stated as follows:

\begin{pb}[$d$-split Syndrome Decoding $\SD(d,\Hm,\sv,t)$]
\label{pb:SD-d}
Given a matrix $\Hm \in \Fq^{(n-k) \times n}$, a syndrome $\sv \in \Fq^{n-k}$ and a distance $t \in \IInt{0}{n}$, find a vector $\ev \in \Fq^n$ such that $\Hm \ev = \sv$ and $\forall i \in \IInt{1}{d}$, $\hw{\ev_{[i]}} = t/d$.
\end{pb}

The $d$-split decoding problem is quite similar to the standard decoding problem but with the additional constraint that the error weight must be regularly distributed over $d$ blocks. Note that the syndrome $\sv$ was actually produced using an injected solution $\widetilde{\ev}$ whose weight is precisely regularly distributed. In other word, there exists at least one $\widetilde{\ev} \in \Fq^n$ such that $\sv = \Hm \widetilde{\ev}$ and $\forall i \in \IInt{1}{d}$, $\hw{\widetilde{\ev}_{[i]}} = t/d$.

In the \SDitH specifications, the hardness of the $d$-split syndrome decoding problem is lower bounded by a quantity that depends on the complexity to solve the standard decoding problem. This lower bound is based on the following theorem:

\begin{theorem}[\cite{FJR22a}]
\label{th:proba_part_dsplit}
Let $\Hm$ be drawn uniformly at random in $\Fq^{(n-k) \times n}$ and let $\sv \in \Fq^n$.
If an algorithm can find any particular solution of the $d$-split syndrome decoding problem $\SD(d, \Hm, \sv, t)$ in time $T$ with probability $\varepsilon_d$, then there is an algorithm that can find any particular solution of the syndrome decoding problem $\SD(1, \Hm, \sv, t)$ in time $T$ with probability $\varepsilon_1$ with
\begin{equation} \label{eq:no_tight_lb}
\varepsilon_1 \geq \frac{\binom{n/d}{t/d}^d}{\binom{n}{t}} \varepsilon_d
\end{equation} 
\end{theorem}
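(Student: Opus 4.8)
The plan is to set up a classical permutation reduction. I would treat the given algorithm $\mathcal{A}_d$ for $\SD(d,\Hm,\sv,t)$ as a black box that, on a uniformly random parity-check matrix, finds any prescribed $d$-split solution with probability $\varepsilon_d$, and use it to build an algorithm $\mathcal{A}_1$ for the standard problem $\SD(1,\Hm,\sv,t)$. The guiding idea is that a $d$-split solution is just a standard solution whose support happens to be evenly spread across the $d$ blocks, so by randomly relabelling the coordinates I can hope to turn an arbitrary target of weight $t$ into a balanced one, apply $\mathcal{A}_d$, and relabel back.

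Concretely, fix a target solution $\widetilde{\ev}$ of weight $t$ of the instance $(\Hm,\sv,t)$. First I would sample a permutation $\pi$ of $\IInt{1}{n}$ uniformly at random and form the permuted instance $(\Hm^\pi,\sv,t)$, where $\Hm^\pi$ is $\Hm$ with its columns reordered according to $\pi$. The reordered vector $\widetilde{\ev}^\pi$ still satisfies $\Hm^\pi\widetilde{\ev}^\pi=\sv$ and has weight $t$, hence is a solution of the permuted standard problem; it is moreover a $d$-split solution exactly when $\pi$ maps $\support{\widetilde{\ev}}$ so that each block $\IInt{(i-1)\frac{n}{d}+1}{i\frac{n}{d}}$ receives precisely $t/d$ of the support positions. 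I would then run $\mathcal{A}_d$ on $(\Hm^\pi,\sv,t)$ and return the inverse-permuted output as the candidate for $\widetilde{\ev}$.

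Two facts drive the analysis. Since $\Hm$ is uniform over $\Fq^{(n-k)\times n}$ and permuting columns is a bijection, $\Hm^\pi$ is again uniform for every fixed $\pi$, and because $\pi$ is drawn independently of $\Hm$, conditioning on any event that depends only on $\pi$ leaves $\Hm^\pi$ uniform. The event that $\widetilde{\ev}^\pi$ is $d$-split regular depends only on how $\pi$ scatters the fixed support of size $t$ among the $d$ blocks; as $\pi$ sends this support to a uniformly random $t$-subset of $\IInt{1}{n}$, this event has probability exactly $\binom{n/d}{t/d}^d/\binom{n}{t}$. Conditioned on it, $\widetilde{\ev}^\pi$ is a genuine $d$-split solution of a uniformly random instance, so by hypothesis $\mathcal{A}_d$ returns exactly that solution with probability $\varepsilon_d$, and undoing $\pi$ recovers $\widetilde{\ev}$. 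Combining the two gives $\varepsilon_1\geq \frac{\binom{n/d}{t/d}^d}{\binom{n}{t}}\,\varepsilon_d$, the inequality absorbing any extra runs in which the reduction succeeds without the target being balanced.

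The step I expect to require the most care is precisely this conditioning argument: I must argue that restricting to the permutations that balance the target does not bias the distribution of $\Hm^\pi$ away from uniform, since only then am I entitled to invoke the probability $\varepsilon_d$ on the permuted instance. This hinges on the independence of $\pi$ and $\Hm$ together with the permutation-invariance of the uniform distribution on parity-check matrices; once that is pinned down the rest is a direct combinatorial count of balanced placements of a $t$-element support into $d$ blocks of size $n/d$.
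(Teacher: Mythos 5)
Your proof is correct: the permutation reduction, the uniformity of $\Hm^\pi$ conditioned on the balancing event (which depends only on $\pi$), and the count $\binom{n/d}{t/d}^d/\binom{n}{t}$ of balanced supports are exactly the right ingredients. The paper itself gives no proof of this statement --- it imports it from \cite{FJR22a} --- and your argument is essentially the standard reduction given in that reference, so there is nothing to correct.
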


Using Theorem \ref{th:proba_part_dsplit}, it can be argued that the best average complexity to solve $\SD(d, \Hm, \sv, t)$ cannot be lower than ${\binom{n/d}{t/d}^d}/{\binom{n}{t}}$ times the best average complexity to solve $\SD(1, \Hm, \sv, t)$. \SDitH measures the difficulty of the $d$-split decoding problem with this lower bound on the complexity 
together with the complexity of the best known attack on $\SD(1, \Hm, \sv, t)$. However, this bound is not tight and  gives optimistic security levels. Indeed, it is considered here that we are only looking for a particular solution; the number of solutions to the problem is not taken into account. But recall that if there are many solutions, we just want to find one of them. We therefore state the following theorem that gives a more precise lower bound on the complexity we can expect to achieve:

\begin{theorem}
\label{th:proba_dsplit}
Let $\Hm$ be drawn uniformly at random in $\Fq^{(n-k) \times n}$ and let $\sv \in \Fq^n$.
If an algorithm can solve $\SD(d, \Hm, \sv, t)$ in time $T$ with probability $p_d$, then there is an algorithm that can solve $\SD(1, \Hm, \sv, t)$ in time $T$ with probability $p_1$ with
\begin{equation} \label{eq:tight_lb}
p_d \leq 1 - \left(1 - \frac{\binom{n}{t}}{\binom{n/d}{t/d}^d} \left( 1 - (1-p_1)^{1/N_1} \right)\right)^{N_d} \approx  \frac{\binom{n}{t}}{\binom{n/d}{t/d}^d} \frac{N_d}{N_1} p_1
\end{equation}
where $N_1$ is the expected number of solutions to the problem $\SD(1, \Hm, \sv, t)$ and $N_d$ is the expected number of solutions to the problem $\SD(d, \Hm, \sv, t)$.
\end{theorem}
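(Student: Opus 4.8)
The plan is to deduce this from the particular-solution bound of Theorem~\ref{th:proba_part_dsplit} by passing between the probability of finding \emph{some} solution and the probability of finding a single \emph{fixed} one. The bridge is the standard heuristic that, when a problem has $N$ solutions, the algorithm's successes on the different solutions behave like independent events of a common probability $\varepsilon$, so that the probability of hitting at least one of them is $p = 1-(1-\varepsilon)^{N}$. The hypothesis that $\Hm$ is uniformly random is what makes a common value $\varepsilon$ legitimate: no solution is distinguished a priori from the others, so the algorithm is equally likely to recover any prescribed one.

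First I would apply this heuristic to the $1$-split problem. Writing $\varepsilon_1$ for the probability that the constructed algorithm outputs one prescribed solution, we have $p_1 = 1-(1-\varepsilon_1)^{N_1}$, hence $\varepsilon_1 = 1-(1-p_1)^{1/N_1}$. Likewise, for the $d$-split problem, if $\varepsilon_d$ denotes the probability that the given algorithm outputs a prescribed solution, then $p_d = 1-(1-\varepsilon_d)^{N_d}$, a quantity that is increasing in $\varepsilon_d$.

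Next I would invoke Theorem~\ref{th:proba_part_dsplit}, which turns an algorithm finding a particular $d$-split solution with probability $\varepsilon_d$ into one finding a particular $1$-split solution with probability $\varepsilon_1 \geq \frac{\binom{n/d}{t/d}^d}{\binom{n}{t}}\,\varepsilon_d$, equivalently $\varepsilon_d \leq \frac{\binom{n}{t}}{\binom{n/d}{t/d}^d}\,\varepsilon_1$; this theorem is also what supplies the $1$-split algorithm whose existence is asserted. Substituting this upper bound into the (increasing) expression $p_d = 1-(1-\varepsilon_d)^{N_d}$ and then replacing $\varepsilon_1$ by $1-(1-p_1)^{1/N_1}$ yields exactly
\[
p_d \leq 1 - \left(1 - \frac{\binom{n}{t}}{\binom{n/d}{t/d}^d}\left(1-(1-p_1)^{1/N_1}\right)\right)^{N_d}.
\]
The approximate form follows from two first-order expansions valid in the regime of interest, where per-run probabilities are tiny: $1-(1-p_1)^{1/N_1}\approx p_1/N_1$ and $1-(1-x)^{N_d}\approx N_d\,x$ for small $x$, which together give $p_d \approx \frac{\binom{n}{t}}{\binom{n/d}{t/d}^d}\frac{N_d}{N_1}p_1$.

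The delicate point is the independence heuristic itself: the events ``the algorithm recovers solution $i$'' are not genuinely independent across the $N$ solutions, and moreover $N_1$ and $N_d$ are only \emph{expected} numbers of solutions, so plugging them into $1-(1-\varepsilon)^N$ introduces a further approximation (formally one would carry $\esp[1-(1-\varepsilon)^N]$ and replace $N$ by its mean). Once this modeling assumption and the symmetry yielding a common $\varepsilon$ are granted, I expect the remaining steps to be purely algebraic, so the heart of the argument is the reduction of Theorem~\ref{th:proba_part_dsplit} together with the two-sided use of $p = 1-(1-\varepsilon)^N$.
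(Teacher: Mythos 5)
Your proof is correct and follows essentially the same route as the paper's: both invoke Theorem~\ref{th:proba_part_dsplit} to bound the per-solution probability $\varepsilon_d \leq \frac{\binom{n}{t}}{\binom{n/d}{t/d}^d}\varepsilon_1$, then pass between per-solution and overall success probabilities via $p_d = 1-(1-\varepsilon_d)^{N_d}$ and $p_1 = 1-(1-\varepsilon_1)^{N_1}$ and substitute. In fact you are more explicit than the paper about the independence/expected-count modeling assumptions underlying those two relations, which the paper simply asserts without comment.
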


\begin{proof}
Let $\sA_d$ be an algorithm which finds a particular solution in $\SD(d, \Hm, \sv, t)$ in times $T$ with probability $\varepsilon_d$. From Theorem \ref{th:proba_part_dsplit}, there exists an algorithm $\sA_1$ which finds a particular solution in $\SD(1, \Hm, \sv, t)$ in times $T$ with probability $\varepsilon_1$ with 
$$
\begin{array}{lcl}
1 - \left(1-\varepsilon_d\right)^{N_d} & \leq & 1 - \left(1 - \frac{\binom{n}{t}}{\binom{n/d}{t/d}^d}\varepsilon_1\right)^{N_d}
\end{array}
$$
We end the proof by noticing that 
$$
p_d  = 1 - \left(1-\varepsilon_d\right)^{N_d} \;\;\;\;\; \text{ and }\;\;\;\;\; p_1  = 1 - \left(1-\varepsilon_1\right)^{N_1}
$$
\end{proof}

In Theorem \ref{th:proba_dsplit}, since we address the $d$-split syndrome decoding problem where $\sv$ is the syndrome of a $d$-split error vector of weight $t$, we have:
\begin{equation}
N_d = 1 + \frac{\binom{n/d}{t/d}^d(q-1)^t - 1}{q^{n-k}}.
\end{equation}
Moreover, the algorithm $\sA_1$ consists essentially in repeating the algorithm $\sA_d$ by permuting the code randomly so this algorithm solves the standard decoding problem where $\sv$ is the syndrome of any $t$-weight error vector. So we have:
\begin{equation}
N_1 = 1 + \frac{\binom{n}{t}^d(q-1)^t - 1}{q^{n-k}}.
\end{equation}

\begin{remark}
Note that when $t$ is smaller than the Gilbert-Varshamov distance, then the only solution to $\SD(1,\Hm,\sv,t)$ is the injected solution and has a probability ${\binom{n/d}{t/d}^d}/{\binom{n}{t}}$ to be a solution for $\SD(d,\Hm,\sv,t)$. So we get the same lower bound as in \SDitH since $p_1 \approx \varepsilon_1$ and $p_d \approx \varepsilon_d$. On the contrary, if $t$ is such that we have many solutions, then the injected solution has little impact and so we only have $p_d \leq p_1 (1+\oo{1})$. Note that when the number of solutions is large, $p_d$ is close to $p_1$ because it is simpler to find a particular solution to the $d$-split decoding problem but there are also  less solutions in proportion.\\
\end{remark}

\textbf{Adapting the projective Stern algorithm for d-split.} Theorem \ref{th:proba_dsplit} induces a lower bound on the complexity of $d$-split decoding, but we cannot guarantee that it is actually possible to reach this bound. It is possible to give an actual algorithm to solve the $d$-split decoding problem. Indeed, we can apply the reduction of Section \ref{sec:reduction} and adapt our projective Stern algorithm to take into account the regularity of the weight of the solution we are looking for. More precisely, at each iteration of Algorithm \ref{algo:SternProj}, one can choose the information set as 
\begin{equation}
I \eqdef I_1 \cup \dots \cup I_d
\end{equation}
where each $I_i$ is a subset of $\IInt{(i-1)\frac{n}{d} + 1}{i\frac{n}{d}}$ of size $\frac{k}{d}$. At each iteration, we bet that at least one sought solution $\ev$ is such that for all $i \in \IInt{1}{d}$, $\hw{\ev_{I_i}}=\frac{p}{d}$.

Note that, in the context of \SDitH, when adapting projective Stern to $d$-split, the optimal parameter $p$ increases: it goes from $1$ to $2$. So the cost to produce the lists $\sL_1$ and $\sL_2$ increases quadratically. Consequently, the Gaussian elimination step becomes negligible, especially if we factorize it. It follows that the Canteaut-Chabaud technique is not relevant because it increases the number of iterations but it does not substantially reduce their cost. This is why finally, for $d > 1$, we do not use the Canteaut-Chabaud's trick.
Appendix \ref{appendix:dsplit}, gives the formulas for the $2$-split projective Stern algorithm. \section{Application to SDitH} \label{sec:sdith}
In this section, we analyze the security of \SDitH and we compare our results with those given in the $\SDitH$ specifications document \cite{AFGGHJJPRRY23}. In the context of the NIST competition\footnote{\url{https://csrc.nist.gov/projects/post-quantum-cryptography}}, the authors tried to reach different security levels: \\
\indent - \textbf{category I:} at least $143$ bits of security ($\approx$ \textsf{AES}-$128$);\\
\indent - \textbf{category III:} at least $207$ bits of security ($\approx$ \textsf{AES}-$192$);\\
\indent - \textbf{category V:} at least $272$ bits of security ($\approx$ \textsf{AES}-$256$).\\
Table \ref{tab:params_SDitHv10} summarizes the parameters proposed in the NIST submission \cite{AFGGHJJPRRY23}  of \SDitH to achieve the above security levels. 
\begin{table}[h]
\begin{center}
\caption{\label{tab:params_SDitHv10}
Parameters of  \SDitH v1.0 for various security levels.}
\newcolumntype{C}[1]{>{\centering\arraybackslash}p{#1}}
\scalebox{0.9}{
\begin{tabular}{|C{3.6cm}||C{1.4cm}|C{1.4cm}||C{0.7cm}|C{0.7cm}|C{0.7cm}|C{0.7cm}|C{0.5cm}|}
\hline
\bf \begin{tabular}{c} Parameter \\ sets \end{tabular} & \multicolumn{2}{c||}{\bf \begin{tabular}{c} NIST \\ recommendations \end{tabular}} & \multicolumn{5}{c|}{\bf \begin{tabular}{c} $d$-split SD\\ parameters \end{tabular}} \\
 & category & \begin{tabular}{c} target \\ security \end{tabular} & $q$ & $n$ & $k$ & $t$ & $d$  \\
\hline
\hline
\texttt{SDitH\_L1\_gf256\_v1.0} & I & $143$ bits & $256$ & $230$ & $126$ & $79$ & $1$ \\
\hline
\texttt{SDitH\_L1\_gf251\_v1.0} & I & $143$ bits  & $251$ & $230$ & $126$ & $79$ & $1$ \\
\hline
\texttt{SDitH\_L3\_gf256\_v1.0} & III & $207$ bits  & $256$ & $352$ & $193$ & $120$ & $2$ \\
\hline
\texttt{SDitH\_L3\_gf251\_v1.0} & III & $207$ bits  & $251$ & $352$ & $193$ & $120$ & $2$ \\
\hline
\texttt{SDitH\_L5\_gf256\_v1.0} & V & $272$ bits  & $256$ & $480$ & $278$ & $150$ & $2$ \\
\hline
\texttt{SDitH\_L5\_gf251\_v1.0} & V & $272$ bits  & $251$ & $480$ & $278$ & $150$ & $2$ \\
\hline
\end{tabular}
}
\end{center}
\end{table}

In the \SDitH specifications, it is considered that the best algorithm to solve the decoding problem is Peters' version of Stern's algorithm of Section \ref{sec:stern_peters}. However, there is a mistake in \SDitH v1.0: it is considered that there is only one solution to the syndrome decoding problem when in fact there are several hundred
for the parameters that have been chosen.
Moreover, Theorem \ref{th:proba_part_dsplit} which lower bounds the complexity of  solving the $d$-split is not tight when there are many solutions. In Table \ref{tab:Results_SDiTHv0.1} we give (i) the results claimed in the specifications of \SDitH v1.0, (ii) the lower bound on the security when considering the multiple solutions and a tighter lower bound obtained from Theorem \ref{th:proba_dsplit}, (iii) the lower bound on the 
security we achieve with our projective Stern decoding used in conjunction with Theorem \ref{th:proba_dsplit} and (iv) the security we achieve with the actual $d$-split Stern's algorithm described at the end of Section \ref{sec:dsplit}. The security is expressed in number of bits.
Note that the last column corresponds to an actual attack on the scheme. Comparing (ii) and (iii), we can see in particular that the projective method is responsible for the loss of around $5$ bits of security. In summary:
\begin{itemize}
\item
even by improving the lower bound of \cite{AFGGHJJPRRY23}, this methodology for proving the security fails to meet the NIST requirements by around $11$-$14$ bits, (column (iii))
\item
 there is an actual attack on the scheme showing that its complexity is below the NIST requirements by around $9$-$14$ bits, (column (iv)).
 \end{itemize} 
\begin{table}[h]
\begin{center}
\vspace{-1cm}
\caption{\label{tab:Results_SDiTHv0.1}
Security level of \SDitH v1.0.}
\newcolumntype{C}[1]{>{\centering\arraybackslash}p{#1}}
\scalebox{0.86}{
\begin{tabular}{|C{3.3cm}||C{0.3cm}|C{0.3cm}|C{1.3cm}||C{0.3cm}|C{0.3cm}|C{1.7cm}||C{0.3cm}|C{0.3cm}|C{0.3cm}|C{1.6cm}||C{0.3cm}|C{0.3cm}|C{1.6cm}|}
\hline
\bf \begin{tabular}{c} Parameter \\ sets \end{tabular} &  \multicolumn{3}{c||}{\bf  \begin{tabular}{c} (i) Claimed \\in the \\specification \\ document \end{tabular}} & \multicolumn{3}{c||}{\bf \begin{tabular}{c} (ii) Correction \\from Section \ref{sec:stern_peters} \\ and \\ Theorem \ref{th:proba_dsplit} \end{tabular}} & \multicolumn{4}{c||}{\bf  \begin{tabular}{c} (iii) Projective \\Stern and \\Theorem \ref{th:proba_dsplit} \end{tabular}} & \multicolumn{3}{c|}{\bf  \begin{tabular}{c} (iv) $d$-split \\ projective \\Stern \end{tabular}} \\
 & $p$ & $\ell$ & security & $p$ & $\ell$ & security  & $p$ &$\ell$ & $c$ & security & $p$ &$\ell$ & security \\
\hline
\hline
\texttt{SDitH\_L1\_gf256\_v1.0} & $1$ & $2$ & $\geq 143.46$ & $1$ & $2$& $\geq 135.29$ & $1$ & $2$ & $1$ & $\bf \geq 129.23$ & $1$ & $2$ & $\bf 129.23$ \\
\hline
\texttt{SDitH\_L1\_gf251\_v1.0}  & $1$ & $2$ & $\geq 143.45$ & $1$ & $2$ &$\geq 134.58$ & $1$ & $2$ & $1$ &  $\bf \geq 128.52$ & $1$ & $2$ &  $\bf 128.52$\\
\hline
\texttt{SDitH\_L3\_gf256\_v1.0} & $2$ & $5$ & $\geq 207.67$ & $2$ & $5$ & $\geq 202.43$ &$1$ &$2$ & $1$ &  $\bf \geq 196.76$  &$2$ & $4$ &  $\bf 199.30$\\
\hline
\texttt{SDitH\_L3\_gf251\_v1.0}  & $2$ & $5$ & $\geq 207.61$ & $2$ & $5$ & $\geq 201.30$ & $1$ & $2$ & $1$ &  $\bf \geq 195.68$ & $2$ & $4$ &  $\bf 198.19$ \\
\hline
\texttt{SDitH\_L5\_gf256\_v1.0} & $2$ & $5$ & $\geq 272.35$ & $2$ & $5$ & $\geq 267.40$ & $1$ & $2$ & $1$ &  $\bf \geq 262.63$ & $2$ & $4$ &  $\bf 264.30$ \\
\hline
\texttt{SDitH\_L5\_gf251\_v1.0} & $2$ & $5$ & $\geq 272.29$ & $2$ & $5$ & $\geq 265.91$ & $1$ & $2$ & $1$ &  $\bf \geq 261.19$ & $2$ & $4$ &  $\bf 262.84$\\
\hline
\end{tabular}
}
\end{center}
\end{table}
Recently, after we communicated preliminary results in \cite{CHT23}, the \SDitH designers proposed a new version 1.1 in 
\url{https://groups.google.com/a/list.nist.gov/g/pqc-forum/c/OOnB655mCN8/m/rL4bPD20AAAJ} 
with updated parameters. Table \ref{tab:params_SDitHv11} presents the new parameters.
\begin{table}[h]
\begin{center}
\caption{\label{tab:params_SDitHv11}
Parameters of the $d$-split decoding problem in \SDitH v1.1 for various security levels.}
\newcolumntype{C}[1]{>{\centering\arraybackslash}p{#1}}
\scalebox{0.9}{
\begin{tabular}{|C{3.6cm}||C{1.4cm}|C{1.4cm}||C{0.7cm}|C{0.7cm}|C{0.7cm}|C{0.7cm}|C{0.5cm}|}
\hline
\bf \begin{tabular}{c} Parameter \\ sets \end{tabular} & \multicolumn{2}{c||}{\bf \begin{tabular}{c} NIST \\ recommendations \end{tabular}} & \multicolumn{5}{c|}{\bf \begin{tabular}{c} $d$-split SD\\ parameters \end{tabular}} \\
 & category & \begin{tabular}{c} target \\ security \end{tabular} & $q$ & $n$ & $k$ & $t$ & $d$  \\
\hline
\hline
\texttt{SDitH\_L1\_gf256\_v1.1} & I & $143$ bits & $256$ & $242$ & $126$ & $87$ & $1$ \\
\hline
\texttt{SDitH\_L1\_gf251\_v1.1} & I & $143$ bits  & $251$ & $242$ & $126$ & $87$  & $1$ \\
\hline
\texttt{SDitH\_L3\_gf256\_v1.1} & III & $207$ bits  & $256$ & $376$ & $220$ &$114$ & $2$ \\
\hline
\texttt{SDitH\_L3\_gf251\_v1.1} & III & $207$ bits  & $251$ & $376$ & $220$ &$114$ & $2$ \\
\hline
\texttt{SDitH\_L5\_gf256\_v1.1} & V & $272$ bits  & $256$ & $494$ &$ 282$ & $156$ & $2$ \\
\hline
\texttt{SDitH\_L5\_gf251\_v1.1} & V & $272$ bits  & $251$ & $494$ &$ 282$ & $156$ & $2$ \\
\hline
\end{tabular}
}
\end{center}
\end{table}
Then, Table \ref{tab:Results_SDiTHv1.1} compares the security claimed in \SDitH v1.1 and our projective Stern decoder on the updated parameters. 
Our projective Stern's algorithm is around $2^5$ times faster than the complexity claimed in \SDitH v1.1. However, the \SDitH designers took a margin of error of $4$ bits compared to the NIST recommendations and therefore our attack is only one bit under the NIST recommendations for the category I  parameters. 
For the other parameters, the lower bound methodology of \cite{AFGGHJJPRRY23} (even after improvement) fails to meet the NIST criterion by about one bit in all cases (column (iii)). 
It remains to see whether the attack on the $d$-split version can be improved (this is used in category III and V parameters) because our corresponding attack (column (iv)) is just
one bit above the required security level. The \textsf{SageMath} program which made it possible to compute the results is available on \url{https://github.com/kevin-carrier/SDitH_security}.
\begin{table}[h]
\begin{center}
\caption{\label{tab:Results_SDiTHv1.1}
Security level of \SDitH v1.1.}
\newcolumntype{C}[1]{>{\centering\arraybackslash}p{#1}}
\scalebox{0.9}{
\begin{tabular}{|C{3.4cm}||C{1.4cm}||C{0.3cm}|C{0.3cm}|C{1.65cm}||C{0.3cm}|C{0.3cm}|C{0.3cm}|C{1.8cm}||C{0.3cm}|C{0.3cm}|C{1.65cm}|}
\hline
\bf \begin{tabular}{c} Parameter \\ sets \end{tabular} & \bf \begin{tabular}{c} Target \\ security  \end{tabular} & \multicolumn{3}{c||}{\bf \begin{tabular}{c} (i-ii) Claimed \\in the \\specification \\ document \end{tabular}} & \multicolumn{4}{c||}{\bf \begin{tabular}{c} (iii) Projective \\Stern and \\Theorem \ref{th:proba_dsplit} \end{tabular}} & \multicolumn{3}{c|}{\bf \begin{tabular}{c} (iv) $d$-split \\ projective \\Stern \end{tabular}} \\
 &  & $p$ & $\ell$ & security  & $p$ &$\ell$& $c$&  security  & $p$ &$\ell$&  security \\
\hline
\hline
\texttt{SDitH\_L1\_gf256\_v1.1} & $143$ & $1$ & $2$& $\geq 147.73$ & $1$ & $2$ & $1$ & $\bf \geq 141.54$ & $1$ & $2$ & $\bf 141.54$ \\
\hline
\texttt{SDitH\_L1\_gf251\_v1.1} &  $143$ & $1$ & $2$ &$\geq 147.72$ & $1$ & $2$ & $1$ & $\bf \geq 141.54$ & $1$ & $2$ & $\bf 141.54$\\
\hline
\texttt{SDitH\_L3\_gf256\_v1.1}  & $207$ & $2$ & $5$ & $\geq 211.05$ &$1$ &$2$ & $1$ & $\bf \geq 205.59$ & $2$ & $4$ & $\bf 207.90$\\
\hline
\texttt{SDitH\_L3\_gf251\_v1.1}  & $207$ & $2$ & $5$ & $\geq 210.99$ & $1$ & $2$ & $1$ & $\bf \geq 205.59$ & $2$ & $4$ & $\bf 207.86$ \\
\hline
\texttt{SDitH\_L5\_gf256\_v1.1} &  $272$ & $2$ & $5$ & $\geq 276.33$ & $1$ & $2$ & $1$ & $\bf \geq 271.69$ & $2$ & $4$ & $\bf 273.26$ \\
\hline
\texttt{SDitH\_L5\_gf251\_v1.1} & $272$ & $2$ & $5$ & $\geq 276.28$ & $1$ & $2$ & $1$ & $\bf \geq 271.68$ & $2$ & $4$ & $\bf 273.23$\\
\hline
\end{tabular}
}
\end{center}
\end{table}

\newcommand{\etalchar}[1]{$^{#1}$}

\appendix
\section{Formulas for the $2$-split projective Stern algorithm} 
\label{appendix:dsplit}

Here we address the $2$-split decoding problem which is one of the instances of $\SDitH$. In Section \ref{sec:dsplit}, we explained how to adapt our projective Stern's algorithm to $d$-split by splitting the support of the error into $d$ equal parts and looking for an error $\ev$ of weight $t/d$ on each of the part.

We just give the formulas which differ from those we presented for the standard version $1$-split in Section \ref{sec:projection}. In particular, for the $2$-split version of our projective Stern's algorithm, Equations \eqref{eq:qin}, \eqref{eq:Nout_tmp}, \eqref{eq:TGaussProj} \eqref{eq:Nsol_proj}, \eqref{eq:list1_proj} and \eqref{eq:list2_proj} become respectively:

\begin{equation}
\begin{array}{lcl}
q_{\mathrm{in}} & = & \dfrac{ \binom{\lfloor (k+1)/4 \rfloor}{\lfloor p/2 \rfloor} \binom{\lfloor (k+1)/2 \rfloor - \lfloor (k+1)/4 \rfloor}{p - \lfloor p/2 \rfloor} }{ \binom{\lfloor (k + 1)/2 \rfloor}{p} } \\[0.5cm]
& &  \;\;\;\;\;\;\;\;\;\; \cdot \; \dfrac{ \binom{\lfloor (k+1)/2 \rfloor - \lfloor (k+1)/4 \rfloor}{\lfloor p/2 \rfloor} \binom{k + 1 - 2\lfloor (k+1)/2 \rfloor + \lfloor (k+1)/4 \rfloor}{p - \lfloor p/2 \rfloor} }{ \binom{k + 1 - \lfloor (k + 1)/2 \rfloor}{p} } \\[0.5cm]
& & \;\;\;\;\;\;\;\;\;\;\;\;\;\;\; \cdot \; \dfrac{ \binom{\lfloor (n - k - 1)/2 \rfloor - \lfloor \ell/2 \rfloor}{t/2 - p} \binom{n - k - 1 - \lfloor (n - k - 1)/2 \rfloor  - \ell + \lfloor \ell/2 \rfloor}{t/2 - p} }{ \binom{\lfloor (n - k - 1)/2 \rfloor}{t/2 - p} \binom{n - k - 1 - \lfloor (n - k - 1)/2 \rfloor}{t/2 - p}}
\end{array}
\end{equation}  

\begin{equation}
N_{\mathrm{out}}^{\mathrm{tmp}} = \dfrac{ \binom{\lfloor n/2 \rfloor}{ \lfloor t/2 \rfloor}^2 }{ \binom{\lfloor (k+1)/2 \rfloor}{p} \binom{k+1-\lfloor (k+1)/2 \rfloor}{p} \binom{\lfloor (n-k-1)/2 \rfloor}{t/2-p} \binom{n-k-1-\lfloor (n-k-1)/2 \rfloor}{t/2-p} }
\end{equation}

\begin{equation}
T_{\mathrm{Gauss}} = 2 (n-k-1) \sum_{i = 1}^{n-k-1} (n - i + 1) = (n-k-1)^2 (n + k + 2)
\end{equation}

\begin{equation}
N_{\mathrm{sol}} = 1 + \frac{\binom{n/2}{t/2}^2 (q-1)^{t-1} - 1}{q^{n-k-1}}
\end{equation}

\begin{eqnarray}
L_1  & = & \textstyle{\binom{\lfloor (k+1)/4 \rfloor}{\lfloor p/2 \rfloor} \binom{\lfloor (k+1)/2 \rfloor - \lfloor (k+1)/4 \rfloor}{p - \lfloor p/2 \rfloor} (q-1)^{p-1}}\\[0.5cm]
L_2 & = & \textstyle{\binom{\lfloor (k+1)/2 \rfloor - \lfloor (k+1)/4 \rfloor}{\lfloor p/2 \rfloor} \binom{k + 1 - 2\lfloor (k+1)/2 \rfloor + \lfloor (k+1)/4 \rfloor}{p - \lfloor p/2 \rfloor} (q-1)^{p-1}}
\end{eqnarray} \end{document}